\let\old@ssect\@ssect 
\newcommand{\di}{\mathrm{i}} 
\newcommand{\argminF}{\mathop{\mathrm{argmax}}\limits}   
\DeclareMathOperator{\Tr}{Tr}
\newcommand{\figurecaption}[2]{\caption[#1]{\textbf{#1.} #2}}
\newtheorem{theorem}{Theorem}[section]
\newtheorem{proposition}[theorem]{Proposition}
\def\@ssect#1#2#3#4#5#6{%
  \NR@gettitle{#6}
  \old@ssect{#1}{#2}{#3}{#4}{#5}{#6}
}
\begin{document}
\begin{frontmatter}

\title{Active Control and Sustained Oscillations \\ in  actSIS Epidemic Dynamics\thanksref{footnoteinfo}} 

\thanks[footnoteinfo]{The research was supported in part by Army Research Office grant W911NF-18-1-0325, Office of Naval Research grant N00014-19-1-2556, National Science Foundation grants CMMI-1635056, CNS-2027908, CCF-1917819, and C3.ai Digital Transformation Institute.}

\author[First]{Yunxiu Zhou} 
\author[Second]{Simon A. Levin} 
\author[Third]{Naomi Ehrich Leonard}

\address[First]{Operations Research and Financial Engineering, Princeton University, 
   Princeton, NJ 08544 USA (e-mail: yunxiuz@princeton.edu).}
\address[Second]{Ecology and Evolutionary Biology, Princeton University, 
   Princeton, NJ 08544 USA (e-mail: slevin@princeton.edu)}
\address[Third]{Mechanical and Aerospace Engineering, Princeton University, 
   Princeton, NJ 08544 USA (e-mail: naomi@princeton.edu)}

\begin{abstract}                
An actively controlled Susceptible-Infected-Susceptible (actSIS) contagion model is presented for studying epidemic dynamics with continuous-time feedback control of infection rates. Our work is inspired by the observation that epidemics  can be controlled through  decentralized disease-control strategies such as quarantining, sheltering in place, social distancing, etc., where individuals actively modify their contact rates with others in response to observations of infection levels in the population. Accounting for a time lag in observations and categorizing individuals into distinct sub-populations based on their risk profiles, we show that the actSIS model manifests qualitatively different features as compared with the SIS model.  In a homogeneous population of risk-averters, the endemic equilibrium is always reduced, although the transient infection level can exhibit overshoot or undershoot. In a homogeneous population of risk-tolerating individuals, the system exhibits bistability, which can also lead to reduced infection. For a heterogeneous population comprised  of risk-tolerators and risk-averters, we prove conditions on model parameters for the existence of a Hopf bifurcation and sustained oscillations in the infected population. 
\end{abstract}

\begin{keyword}
 active control, feedback, contagion models, epidemic processes, heterogeneity, oscillation.
\end{keyword}

\end{frontmatter}

\section{Introduction}
\label{sec:intro}


Deterministic compartmental models in epidemiology have provided valuable insights  for the understanding of evolutionary dynamics of infectious disease spread in a host population (\cite{kermack1927contribution}). These models have also been widely applied to study
various other spreading dynamics, including but not limited to information dissemination in social networks (\cite{jin2013epidemiological}), sentiment contagion in human society (\cite{zhao2014sentiment}) and
 the propagation of systemic risks in financial market (\cite{demiris2014epidemicfinance}).
 Although these deterministic models unavoidably ignore some important details such as individual heterogeneity and network structure, they adequately  capture the qualitative features of the spreading dynamics, including transient system behavior and stability of solutions. 
The models have been shown to be close approximations of certain Markov chain models of the underlying stochastic dynamics,  see, e.g., \cite{sahneh2013generalized}.
The Susceptible-Infected-Susceptible (SIS) model has been widely studied and applied in epidemiological modeling.
 While its assumption that individuals acquire  no immunity after recovery  may not be suitable for certain diseases, it provides a worst case scenario, which is valuable   for a large class of contagious diseases in general. 
 
The SIS model in its simplest form assumes a constant infection rate. However, it is well acknowledged that the rate can vary over time due to the different control strategies taken by individuals, which in turn affects the contagion dynamics. 
Indeed, understanding such models and their extensions is fundamental to developing disease control strategies. A review of analysis and control of epidemics is provided in \cite{nowzari2016analysiscontrol}. 


In this paper, we propose a model with feedback controlled infection rates to account for  active control strategies. 
Individuals modify their contact rates with others based on what they observe about the level of infection in the population. 
We model a time lag in the observations and distinguish individuals as \textit{risk-averters}, \textit{risk-tolerators},  and  \textit{risk-ignorers}. 
\textit{Risk-averters} represent those who change their contact rate with others in the opposite direction as the change in the observed infected population level, e.g., those who could and did stay at home and practiced increased social distancing during the COVID-19 pandemic as they saw the infected population grow.  \textit{Risk-tolerators} represent those whose contact rate with others changes in the same direction as the change in the observed infected population level, e.g.,  
health care workers, delivery workers, and other essential workers, who were obliged to work during the COVID-19 pandemic. 
\textit{Risk-ignorers}  
represent those who do not actively modify their contact rates. 

 Our work contributes to the literature in the following ways. 
First, while active and passive spreading was first distinguished in the social economics literature (see \cite{hartmann2008modeling}), our model rigorously demonstrates the differences between them, and we prove new results on the dynamics of contagion with active control. 
Further, our model 
serves as one form of the state-dependent approaches discussed in \citet{rands2010groupstate} that offer evolutionarily grounded ways for studying social contagion in  collective processes.
Second, our feedback model uses a low-pass filter of the measured infected population level. This models the observation delay and introduces an important robustness to uncertainty.  This is in contrast to contagion models which feed back infected population level directly, as in \cite{Baker2020,Franco2020}.
Third, we prove a new and relatively simple scenario under which sustained oscillations appear within epidemiological frameworks without external forcing. Understanding mechanisms that can lead to oscillations is critically important in the context of infectious disease spread
(\cite{LinAndreasenLevin1999,dushoff2004dynamicaloscilla,camacho2011explainingmultiwave}, \cite{xu2020mechanisticmultiwave}). It is likewise of great interest  in many other socio-economic processes,  
e,g., the rise and fall of business cycles (\cite{mishchenko2014oscillationsbusiness}) and fluctuation of 
behavioral preferences in social networks (\cite{pais2012hopfDareenGrammer}).
Fourth, in contrast to control strategies that require global knowledge about the exact underlying spreading dynamics, e.g., \cite{nowzari2016analysiscontrol}, our work provides evidence for the promise of tunable decentralized active control strategies to manage the dynamics of epidemics.


    The paper is organized as follows. In Section \ref{sec:background}, we review the 
    Susceptible-Infected-Susceptible  (SIS) model. 
    We introduce the actively controlled Susceptible-Infected-Susceptible (actSIS) model in homogeneous populations of \textit{risk-tolerators} and \textit{risk-averters} respectively in Section \ref{sec:homo}, where we analyze the equilibrium solutions and stability conditions and show their qualitatively different features as compared with the SIS model. 
    We examine the actSIS model in a heterogeneous population comprised of \textit{risk-tolerators} and \textit{risk-averters} in Section \ref{sec:hetero}, and prove conditions for a Hopf bifurcation with a stable limit cycle. We conclude 
    in Section \ref{sec:disc}.

\section{Background}
\label{sec:background}


\subsection{SIS in a well-mixed population}

Consider a disease spreading in a large, randomly-mixed population, where  individuals are divided into either susceptible (S) or infected (I) classes. 
Susceptible individuals get infected at rate $\beta$ while infected individuals recover at rate $\delta$. 
Let $p(t)$ be the fraction of infected individuals and $s(t)$ the fraction of susceptible individuals at time $t$.  The SIS model is 
\begin{equation}
    \dot s =- \beta sp + \delta p, \;\;\; 
\dot p = \beta sp  - \delta p. 
     \label{SIS}
\end{equation}

The force of infection term $\beta p$ in \eqref{SIS} describes the rate at which susceptible individuals get infected (\cite{muench1934derivationforceofinfection}). It can be decomposed into three terms: $\beta p = \Bar{\beta} \times \alpha \times p$, where $\Bar{\beta}$ is the transmission rate of the disease and an intrinsic property of the disease, and $\alpha$ is the effective number of contacts per unit time. 
Since $s(t) = 1 - p(t)$, \eqref{SIS} can be rewritten as
\begin{equation}
    \begin{aligned}
    \dot{p}=\beta \left(1-p_{}\right) p -\delta p_{}.
\end{aligned}
\label{ode-sis-p}
\end{equation}

The steady-state behavior of solutions to the well-mixed SIS model \eqref{ode-sis-p} is characterized by the basic reproduction number $\mathcal{R}_{0} = \beta/\delta$, a key concept in epidemiology that defines the epidemic threshold of a particular infection (\cite{diekmann1990definitionR0}). 
If $\mathcal{R}_{0}>1$,  the disease persists and a nonzero fraction of the population is  infected at steady state: as $t \rightarrow \infty$, $p(t) \rightarrow 1-\delta/\beta$, the endemic equilibrium (EE).  If $\mathcal{R}_{0} \leq 1$, the disease dies out at steady state: as $t \rightarrow \infty$, $p(t) \rightarrow 0$, the infection free equilibrium (IFE). 
At $\mathcal{R}_{0}=1$, 
there is a transcritical bifurcation. 

\subsection{Network SIS model}
A natural extension of the homogeneous population setting is the introduction of heterogeneities of various kinds. Characterizing population heterogeneity in terms of infection and/or recovery rates has been discussed both in population subgroups (\cite{anderson1992infectious}) and in networks 
(\cite{hethcote1984lecture}, \cite{renato2019adaptive}). In the following, we review the network SIS model that was originally introduced as the multi-group SIS model in \cite{lajmanovich1976deterministicnetworkSIS}. 

Consider a heterogeneous population of $n$ sub-populations, each large, well-mixed and homogeneous.  
Let susceptible individuals in sub-population $i$ get infected through  contact with infected individuals in sub-population $j$ at rate $\beta_{i j} \geq 0$ and infected individuals in sub-population $i$ recover at rate $\delta_i\geq 0$.  The rate $\beta_{ij} $ can be decomposed as $\beta_{ij} = \bar\beta\times \alpha_{ij}$ where $\bar\beta$ is the transmission rate and $\alpha_{ij}$ represents the effective contact rate between sub-population $i$ and $j$.
The network SIS model is
\begin{equation}
    \begin{aligned}
    \dot{p_i}= \left(1-p_{i}\right) \sum_{i=1}^{n}\beta_{ij} p_{j} -\delta_{i} p_{i},
\end{aligned}
\label{ntwork-sis-p}
\end{equation}
where  $p_i(t)$ denotes the  fraction of infected individuals in the $i$th sub-population, or equivalently, the probability that a typical individual in sub-population $i$ is infected at time $t$. 
Let $B=\left\{\beta_{j k}\right\}$ and $\Gamma=\operatorname{diag}\left(\delta_{1}, \ldots, \delta_{N}\right)$ be the
infection matrix and the recovery matrix, respectively. 

For the network SIS model \eqref{ntwork-sis-p}, the basic reproduction number is $\mathcal{R}_{0}=\rho\left(B \Gamma^{-1}\right)$, where $\rho$ denotes the spectral radius. For $\mathcal{R}_{0} \leq 1,$ solutions converge to the IFE as $t \rightarrow \infty$ while for $R_{0}>1,$ solutions converge to the EE. See \cite{lajmanovich1976deterministicnetworkSIS}, \cite{fall2007epidemiologicalnetworkSIS}, \cite{BulloEpidemics2018}  for details. 

\section{Homogeneous Population}
\label{sec:homo}
 Our proposal of the actively controlled Susceptible-Infected-Susceptible (actSIS) model is based on the following observations. 
 First, individuals conduct behavioral changes  as they acquire information of the epidemics which consequently affect their contact rates with others (\cite{funk2009spreadawareness}). 
 Such information, however, often involves estimations that are delayed  
 and  unavoidably omits details in the finer time scale. We are motivated in part by the model and study of change in susceptibility after first infection as presented \cite{renato2019adaptive}. 
 
 Accordingly, we present the actSIS model for studying epidemic dynamics with continuous-time feedback control of infection rates. 
 We let the infection rate be
the product of the intrinsic transmission rate $\bar{\beta}$ and an effective contact rate $\alpha(\cdot)$ that is actively modified by individuals based on their observations of the system state. To account for the uncertainty and delay in measurements of the infection level in the population, we let the feedback responses depend on the filtered state $p_s$ of the infected fraction $p$, where $p_{s}$ tracks 
$p$ and possibly some external stimulus $r(t)$ with a time constant $\tau_{s}$. The actSIS model is
\begin{equation}
    \begin{aligned}
    \dot{p} &=  \beta^{}  (1-p) p - \delta p,\\
     \tau_s \dot{p_{s}} &= -p_{s} + p + r, \\
    \beta^{} &=\Bar{\beta}\alpha^{}(p_{s}). 
    \end{aligned}
    \label{orginal-ASIS-dynamics}
\end{equation}

Acknowledging that people conduct social-behavioral changes in a soft-threshold manner (\cite{smaldino2018sigmoidal}), we consider sigmoidal-shaped functions for the feedback response $\alpha(\cdot)$. Similar feedback mechanisms  in neuronal dynamics have been shown to exhibit ultra-sensitivity and robustness to inputs and variability (\cite{sepulchre2019activecontrol}).
Let $\phi: [0,1] \rightarrow [0,1]$ be a monotonically increasing saturating function 
\begin{equation*}
    \phi(p;\mu,\nu)= \left(1+\left(\frac{p\cdot(1-\mu)}{\mu\cdot(1-p)}\right)^{-\nu}\right)^{-1},
\end{equation*}
with location parameter $\mu\in(0,1)$ and slope parameter $\nu\in(0,1)$.\footnote{This particular form of a sigmoidal function over the unit interval was proposed by \cite{antweiler_2018}. $\mu$ controls the value of $p$ at which $\Phi(p)=1/2$ and $\nu$ controls how gradually or sharply the function grows.}
For \textit{risk-tolerators} we define $\alpha$ to vary directly with $p_s$:
\begin{equation}
\begin{aligned}
\alpha^{}\left(p_{s}\right)&= \phi\left(p_{s} ; \mu_{T}, \gamma_{T}\right) =: \phi^{T}\left(p_{s}\right).\\
\end{aligned}
\end{equation}
For \textit{risk-averters} we define $\alpha$ to vary inversely with $p_s$:
\begin{equation}
\begin{aligned}
\alpha^{}\left(p_{s}\right)&=1-\phi\left(p_{s} ; \mu_{A}, \gamma_{A}\right)=: 1-\phi^{A}\left(p_{s}\right).
\end{aligned}
\end{equation}


It is not surprising to find that the EE of the actSIS model for both \textit{risk-tolerators} and \textit{risk-averters} is upper bounded by that of the SIS model, since the incorporation of feedback responses $\alpha(\cdot) \in [0,1]$ always decreases the effective infection rates. However, 
the underlying structure resulting in these lower endemic solutions differs among types of individuals, and
 the actSIS model shows  qualitatively different dynamical features as compared to the SIS model. 
 
 For homogeneous \textit{risk-tolerators},  \eqref{orginal-ASIS-dynamics} undergoes a saddle node bifurcation and exhibits bistability as illustrated in Fig.~\ref{fig:bistable} and Fig.~\ref{fig:bifur}(a).  The saddle node bifurcation point is greater than the transcritical bifurcation point in the SIS model, which has implications for control design since  
 it implies that it is more difficult for the disease to spread in a population of \textit{risk-averters} than in a population of \textit{risk-ignorers}. 

For  homogeneous \textit{risk-averters}, \eqref{orginal-ASIS-dynamics} undergoes a transcritical bifurcation, as in the SIS model, but with a reduced EE, as illustrated in Fig.~\ref{fig:bifur}(b). Further, the EE becomes a stable focus under certain conditions,  resulting in large overshoot and/or undershoot in the transient dynamics, as illustrated in Fig.~\ref{fig:focus}. 

We devote the rest of this section to the detailed description and proof of these rich dynamics. 
For simplicity of exposition, we set $\delta=1, \tau_s=10$ throughout the rest of this paper.

\begin{figure}[!htbp]
\centering
\subfigure[IFE]{
    \includegraphics[width=0.2\textwidth]{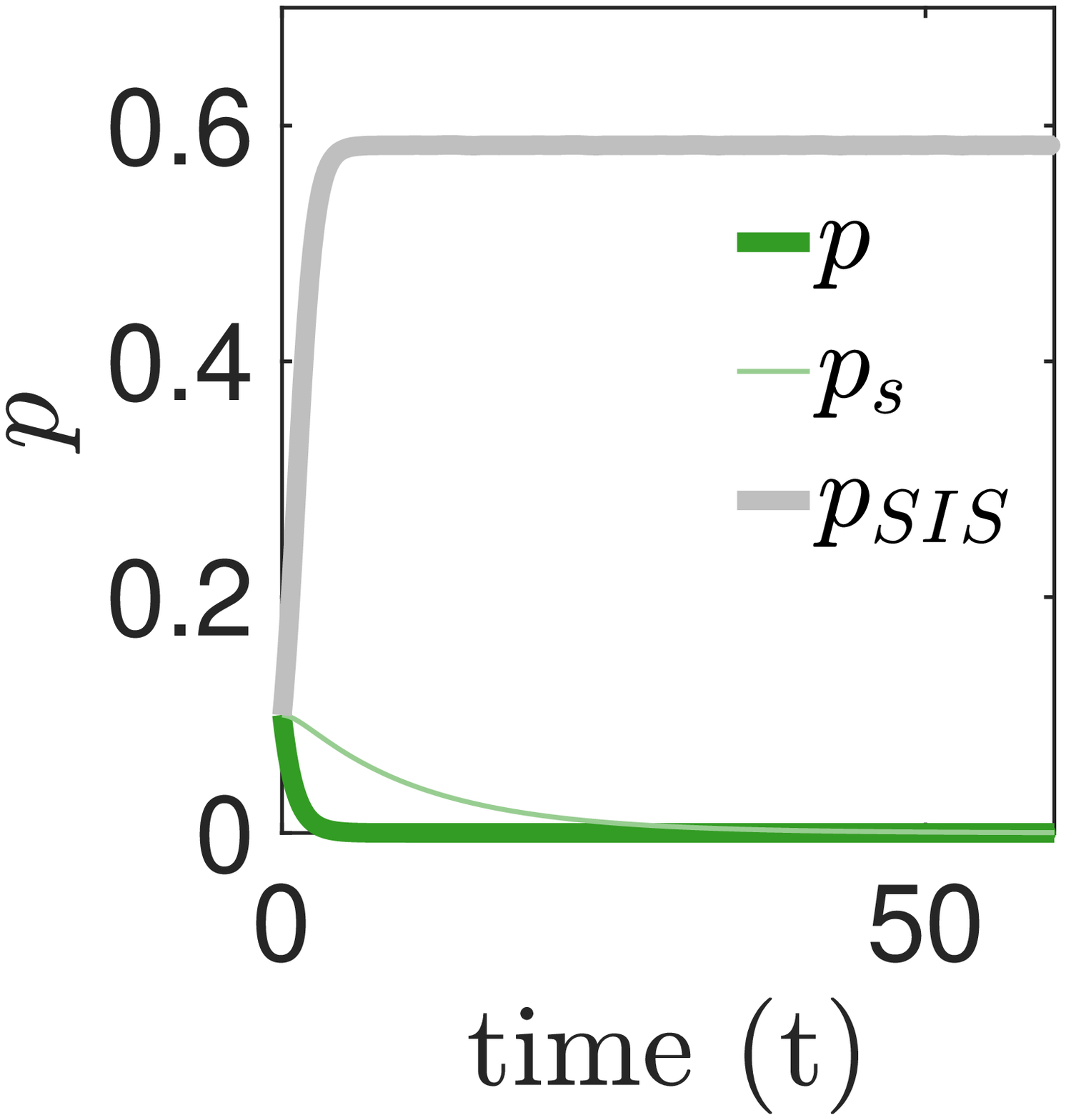}
}
\subfigure[EE]{
    \includegraphics[width=0.2\textwidth]{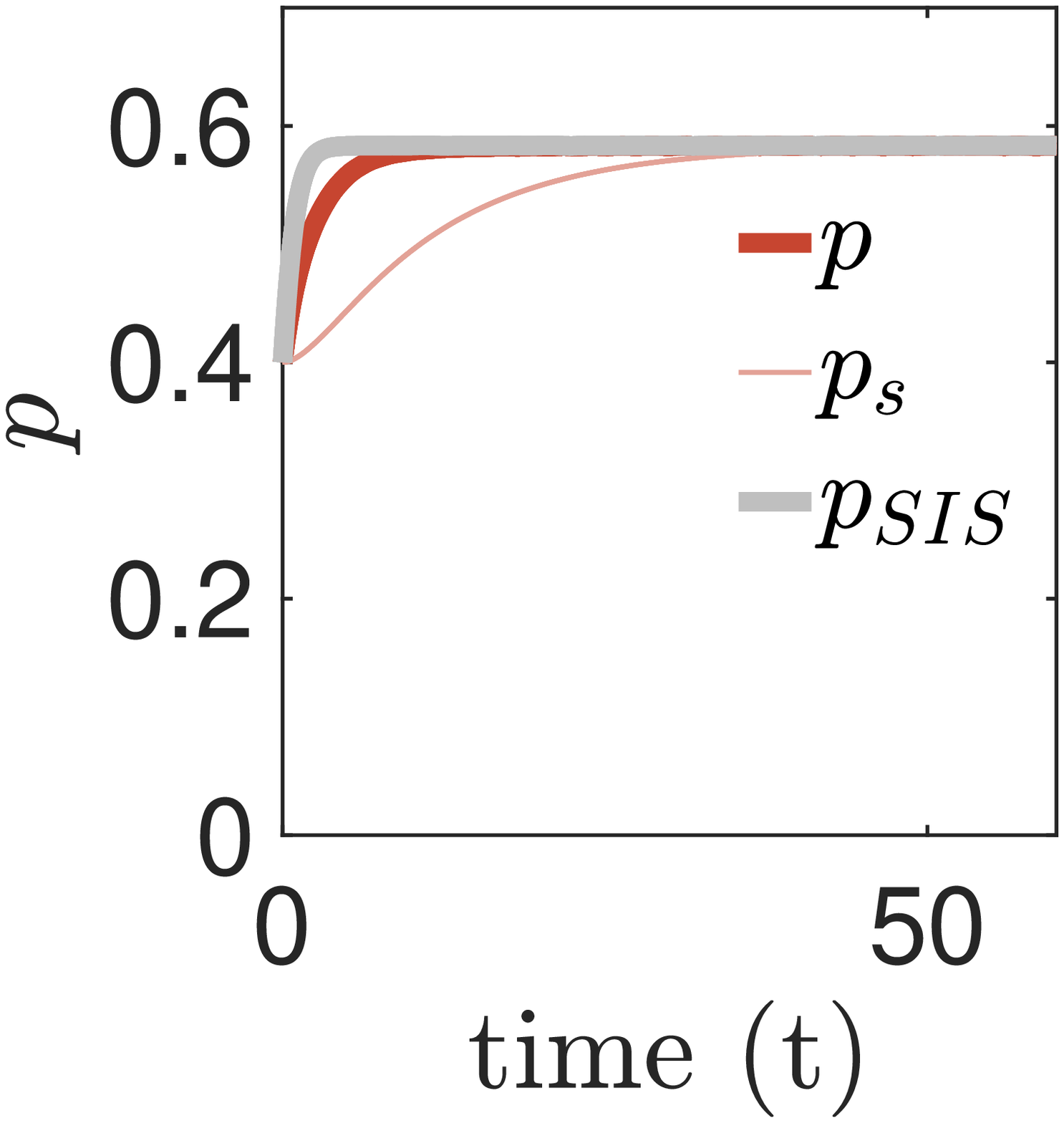}
}
  \centering
\figurecaption{%
Bi-stability of \textit{risk-tolerators}%
}{%
The IFE and EE are both stable solutions for $\bar\beta > \bar\beta_c$ in the actSIS model for \textit{risk-tolerators}. a) For 
$p(0)=0.1$, the solution (green) of actSIS (\textit{risk-tolerators}) converges to the IFE, whereas the solution (grey) of SIS (\textit{risk-ignorers}) converges to the EE. b) For 
$p(0)=0.4$, the solution (red) of actSIS and the solution (grey) of SIS converge to the EE.  
Parameters for all simulations are $\bar{\beta} =2.4, \mu_T=0.35,\nu_T=8.$ For these parameters, $\bar\beta_c = 1.87< \bar\beta$.
}
\label{fig:bistable}
\end{figure}

\begin{figure}[!htbp]
\centering
\subfigure[Risk-tolerators]{
    \includegraphics[width=0.2\textwidth]{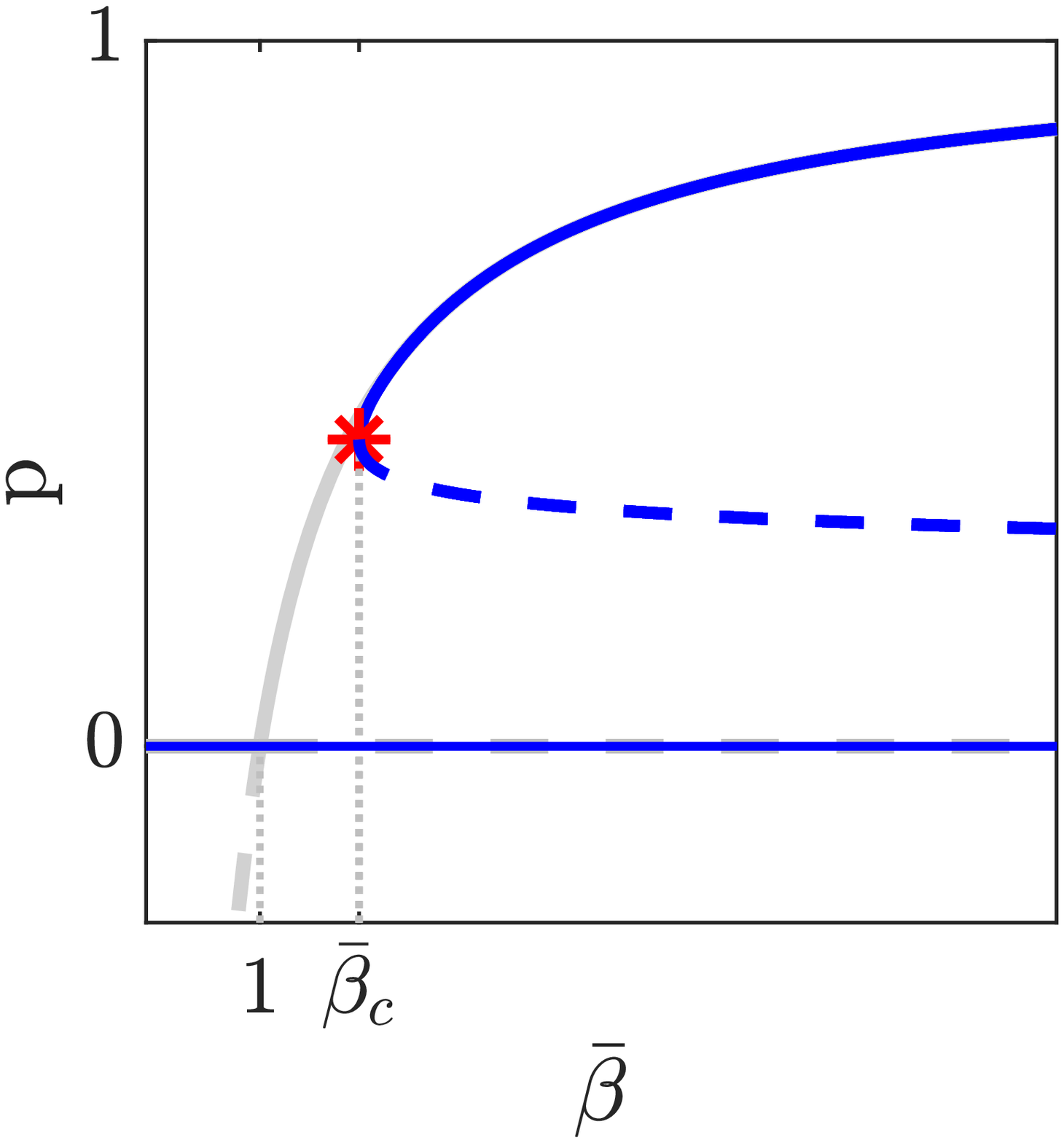}
}
\subfigure[Risk-averters]{
    \includegraphics[width=0.2\textwidth]{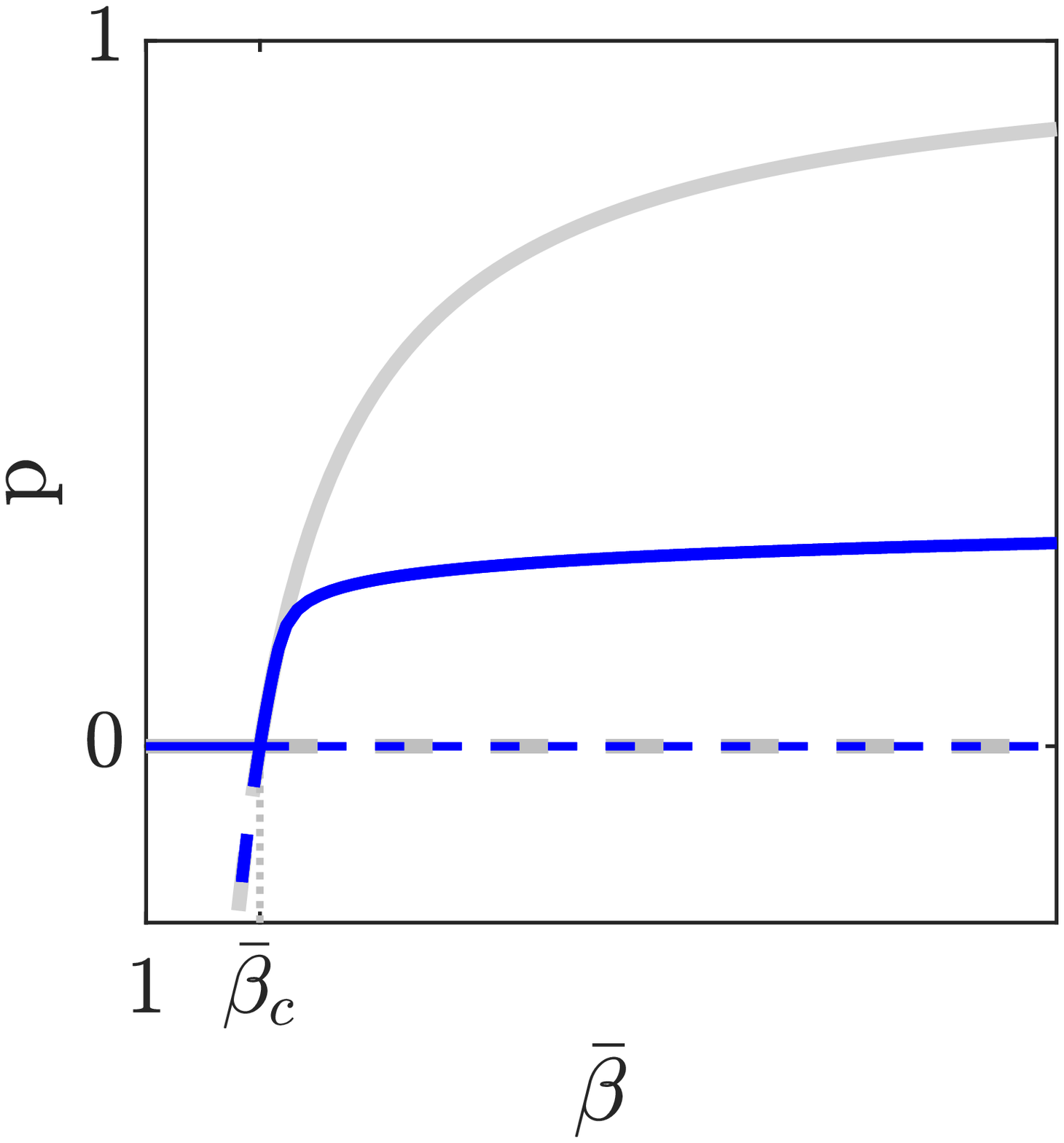}
}
  \centering
\figurecaption{%
Bifurcation diagrams of the actSIS model%
}{%
a) For homogeneous \textit{risk-tolerators}, the actSIS model  undergoes a saddle-node bifurcation  at  bifurcation point $\bar{\beta}_c>1$. b) For homogeneous \text{risk-averters}, the actSIS model undergoes a transcritical bifurcation at $\bar\beta=1$ as in the SIS model. The EE for the actSIS models are upper-bounded by that of the SIS model in both cases. 
The bifurcation diagrams of the actSIS models are drawn in blue and the SIS in grey. Solid curves denote stable solutions and dashed curves denote  unstable solutions. Parameters for these plots are $ \mu_T=0.35,\nu_T=8, \mu_A=0.25,\nu_A=8.$ The bifurcation diagrams: Fig.1 and Fig. \ref{fig:bifur-Hopf} (a), are produced with the MATLAB package matcont (\cite{dhooge2003matcont}).

}
\label{fig:bifur}
\end{figure}

\vspace{1.8mm}

\begin{theorem} [Steady-state behavior of the actSIS model with homogeneous risk-tolerators] Consider the actSIS dynamics  for a homogeneous population of \textit{risk-tolerators} given by \eqref{orginal-ASIS-dynamics} with $r(t) = 0$ and $\alpha(\cdot) =\phi^{T}(\cdot)$.  Then the following hold:
\begin{enumerate}[label=(\roman*)]
    \item There are  two  types of equilibrium solutions:
\begin{enumerate}
    \item IFE: $p=p_{s}=0$. It is always stable; 
    \item EE: $p=p_{s}=p^*$ satisfying  $\phi^{T}(p^*)(1-p^*) = \frac{\delta}{\Bar{\beta}}$. There are zero, one, or two such solutions. A solution is stable if $\Phi^{T \prime}(p^*) \cdot(1-p^*)^{2}<\frac{\delta}{\bar{\beta}}$. 
\end{enumerate}

\item The system undergoes a saddle node bifurcation at the bifurcation point $(\Bar{\beta}_c, p^*)$ with stable upper branch (stable EE) and unstable lower branch (unstable EE). The epidemic threshold $\Bar{\beta}_c/\delta$ is always larger than that of the SIS model, i.e., $\Bar{\beta}_c/\delta>1$.

\item For $\Bar{\beta}>\Bar{\beta}_c$ the system exhibits bistability of the IFE and the EE. For $\Bar{\beta}\leq\Bar{\beta}_c$ the IFE is the only stable equilibrium.

\item  As  $\Bar{\beta}_c<\Bar{\beta} \rightarrow \infty$, 
$p^* \rightarrow 1-\delta/\Bar{\beta}$, the EE of SIS.  

\end{enumerate}
\end{theorem}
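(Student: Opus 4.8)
The plan is to reduce the whole statement to properties of the scalar function $g(p):=\phi^{T}(p)\,(1-p)$ on $[0,1]$. Putting $r=0$ and $\dot p=\dot p_{s}=0$ in \eqref{orginal-ASIS-dynamics} forces $p_{s}=p$ and then $p\bigl(\bar\beta\,\phi^{T}(p)(1-p)-\delta\bigr)=0$, so the equilibria are exactly $p=p_{s}=0$ (the IFE) and the roots $p^{*}\in(0,1)$ of $g(p^{*})=\delta/\bar\beta$; this is part~(i). From the closed form of $\phi$ one has $\phi^{T}(0)=0$ and $\phi^{T}(1^{-})=1$, hence $g(0)=g(1)=0$, $g>0$ on $(0,1)$, and, since $\phi^{T}\le 1$ while $1-p<1$ there, $\max_{[0,1]}g<1$. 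The one substantive step is to show that \emph{$g$ is unimodal with a nondegenerate interior maximum.} Writing $u(p)=\frac{(1-\mu_{T})\,p}{\mu_{T}(1-p)}$ and using $\phi^{T}=u^{\gamma_{T}}/(1+u^{\gamma_{T}})$, a direct computation gives
\[
g'(p)=\phi^{T}(p)\,\frac{\gamma_{T}-p\bigl(1+u(p)^{\gamma_{T}}\bigr)}{p\bigl(1+u(p)^{\gamma_{T}}\bigr)},
\]
so on $(0,1)$ the sign of $g'$ equals that of $\gamma_{T}-p\bigl(1+u(p)^{\gamma_{T}}\bigr)$. The map $p\mapsto p\bigl(1+u(p)^{\gamma_{T}}\bigr)$ is a strictly increasing bijection of $(0,1)$ onto $(0,\infty)$, so it meets the level $\gamma_{T}$ at a unique point $p_{m}$, with $g'>0$ on $(0,p_{m})$, $g'<0$ on $(p_{m},1)$ and $g''(p_{m})<0$. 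Hence $g(p)=\delta/\bar\beta$ has zero, one or two roots according as $\delta/\bar\beta$ is greater than, equal to, or less than $g(p_{m})$, which is the trichotomy of~(i)(b); when there are two roots, $p^{*}_{-}<p_{m}<p^{*}_{+}$.

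Next I would linearize the planar system in $(p,p_{s})$ (all trajectories enter and remain in the compact forward-invariant box $[0,1]^{2}$, so Poincar\'e--Bendixson applies). Its Jacobian is
\[
J=\begin{pmatrix}\bar\beta\,\phi^{T}(p_{s})(1-2p)-\delta & \bar\beta\,\phi^{T\prime}(p_{s})(1-p)p\\[3pt] 1/\tau_{s} & -1/\tau_{s}\end{pmatrix}.
\]
At the IFE $J$ is lower triangular with eigenvalues $-\delta$ and $-1/\tau_{s}$, both negative, so the IFE is always locally asymptotically stable, giving~(i)(a). At an EE, substituting $\bar\beta\,\phi^{T}(p^{*})=\delta/(1-p^{*})$ gives $\operatorname{tr}J=-\delta p^{*}/(1-p^{*})-1/\tau_{s}<0$ and $\det J=\tfrac{p^{*}}{\tau_{s}}\bigl(\tfrac{\delta}{1-p^{*}}-\bar\beta\,\phi^{T\prime}(p^{*})(1-p^{*})\bigr)$. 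Since the trace is always negative, the EE is stable iff $\det J>0$, i.e.\ iff $\phi^{T\prime}(p^{*})(1-p^{*})^{2}<\delta/\bar\beta$ (with $\phi^{T\prime}=\Phi^{T\prime}$), which is the criterion in~(i)(b); dividing by $1-p^{*}$ and using $\delta/\bar\beta=g(p^{*})=\phi^{T}(p^{*})(1-p^{*})$ rewrites it as $g'(p^{*})<0$. By the unimodality of $g$, $g'(p^{*}_{+})<0<g'(p^{*}_{-})$, so the upper EE is stable and the lower EE is a saddle, which is the branch labelling of~(ii); together with the ever-stable IFE this is the bistability asserted in~(iii) for $\bar\beta>\bar\beta_{c}$. (With the Bendixson--Dulac weight $1/p$ on $\{p>0\}$ the divergence of the weighted field equals $-\bar\beta\,\phi^{T}(p_{s})-1/(\tau_{s}p)<0$, so there are no periodic orbits and every trajectory converges to an equilibrium, completing the bistable picture.)

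Finally I set $\bar\beta_{c}:=\delta/g(p_{m})$. Since $g(p_{m})<1$ we get $\bar\beta_{c}/\delta=1/g(p_{m})>1$, the threshold claim of~(ii); and for $\bar\beta\le\bar\beta_{c}$ the only equilibria are the IFE and, at $\bar\beta=\bar\beta_{c}$, the degenerate fold point $p_{m}$ (semistable, not asymptotically stable), so the IFE is the only stable equilibrium, which is~(iii). At $\bar\beta=\bar\beta_{c}$ the two EE coalesce at $p_{m}$, where $\det J=0$ with the other eigenvalue equal to the strictly negative trace; the nondegeneracy $g''(p_{m})\ne 0$ and the transversality $\partial_{\bar\beta}(\delta/\bar\beta)=-\delta/\bar\beta^{2}\ne 0$ verify the hypotheses of the saddle-node bifurcation theorem (a one-dimensional center-manifold reduction returns $g(p)-\delta/\bar\beta=0$), giving~(ii). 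For~(iv), on the stable branch $g(p^{*}_{+})=\delta/\bar\beta\to 0$ as $\bar\beta\to\infty$, and $g|_{[p_{m},1]}$ is a decreasing homeomorphism onto $[0,g(p_{m})]$, so $p^{*}_{+}\to 1$; then $\phi^{T}(p^{*}_{+})\to\phi^{T}(1^{-})=1$, and from $1-p^{*}_{+}=\delta/\bigl(\bar\beta\,\phi^{T}(p^{*}_{+})\bigr)$ we obtain $p^{*}_{+}-\bigl(1-\delta/\bar\beta\bigr)=\tfrac{\delta}{\bar\beta}\bigl(1/\phi^{T}(p^{*}_{+})-1\bigr)\to 0$, i.e.\ $p^{*}_{+}\to 1-\delta/\bar\beta$, the SIS endemic equilibrium. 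The only genuinely nontrivial step is the unimodality and nondegeneracy of $g$, which rests on the monotonicity of $p\mapsto p\bigl(1+u(p)^{\gamma_{T}}\bigr)$; everything else reduces to the elementary Jacobian computations above and the standard saddle-node theorem.
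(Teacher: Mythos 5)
Your proof is correct and follows the same core route as the paper: reduce the equilibrium structure to the scalar function $g(p)=\phi^{T}(p)(1-p)$, compute the Jacobian at the IFE and EE, and classify stability via trace and determinant, with $\bar\beta_c=\delta/g(\hat p)$ at the maximizer of $g$. You go further than the paper in a few worthwhile places: you actually prove the unimodality and nondegeneracy of $g$ from the explicit form of $\phi$ (the paper asserts the increase--decrease shape without computation), you identify the stability criterion $\phi^{T\prime}(p^*)(1-p^*)^2<\delta/\bar\beta$ directly with $g'(p^*)<0$ rather than arguing via the auxiliary function $h(p)=\phi^{T\prime}(p)(1-p)^2$ and its intersections with $g$ (your version avoids the paper's slightly delicate claim about an intersection at $p=0$), you verify the transversality and nondegeneracy hypotheses of the saddle-node theorem rather than only checking $\det J=0$, and your Bendixson--Dulac argument with weight $1/p$ upgrades the local bistability statement to global convergence to equilibria. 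The only blemish is a harmless sign slip in part (iv), where $p^{*}_{+}-(1-\delta/\bar\beta)=\tfrac{\delta}{\bar\beta}\bigl(1-1/\phi^{T}(p^{*}_{+})\bigr)$ rather than its negative; the conclusion that the difference tends to zero is unaffected.
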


\begin{proof} 
\begin{enumerate}[label=(\roman*)]
    \item The equilibrium solutions are straightforward to compute. For stability, we compute the Jacobian  as 
\begin{equation}
    J^T = \begin{bmatrix}
                     \Bar{\beta} \phi^T (p_{s}) (1-2 p)-\delta & \Bar{\beta} (1-p) p {\phi^{T}}^{\prime} (p_{s}) \\
                    \frac{1}{\tau_s}  & -\frac{1}{\tau_s} 
                    \end{bmatrix}.
                    \label{gradient}
\end{equation}
For the IFE,
\eqref{gradient} reduces to 
\begin{equation}
    \left.J^T\right|_{ \text{IFE}}= \begin{bmatrix}
                     -\delta & 0 \\
                    \frac{1}{\tau_s}  & -\frac{1}{\tau_s} 
                    \end{bmatrix},
\end{equation}
implying that the IFE is always stable. For the EE,
\begin{equation}
    \left.J^T\right|_{ \text{EE}}= \begin{bmatrix}
                     -\delta \frac{p^*}{1-p^*} & \Bar{\beta} (1-p^*) p^* {\phi^{T}}^{\prime}(p^*)  \\
                    \frac{1}{\tau_s}  & -\frac{1}{\tau_s} 
                    \end{bmatrix}.
\end{equation}
Stability requires the equilibrium solution to satisfy  $\frac{\delta}{1-p^*} - \Bar{\beta} (1-p^*)\phi^{T \prime}(p^*) >0 $. Since $p^* \in (0,1)$ , it is equivalent to requiring that  $ \phi^{T \prime}(p^*)\cdot (1-p^*)^2 < \frac{\delta}{\Bar{\beta}}$. 

\item 
We start by computing the critical value $\Bar{\beta}_c$. Since $\phi^T(\cdot)$ is a monotonically increasing function taking values between 0 and 1, $g(p):=\phi^T(p)(1-p)$ takes value between 0 and 1 and it first increases from 0 (since $g(0)=0$) and then decreases to 0 ($g(1)=0$). Depending on parameter values ($\mu_T, \nu_T, \Bar{\beta},\delta$), the EE  has either zero, one or two solutions. Let $h(p):= \phi^{T \prime}(p)\cdot (1-p)^2 $. It first  increases and then decreases for $p\in [0,1]$. Denoting $\hat{p}:= \argminF_p g(p) $, one can check that $h(p)$ intersects with $g(p)$ at three points: $0, \hat{p}$ and $1$. This implies that when the EE has two solutions (when $\delta/\beta < g(\hat{p})$), the smaller solution is unstable and the larger one is stable.
$\Bar{\beta}_c$ can be solved analytically by  solving $\Bar{\beta}_c=\delta/g(\hat{p})$. 

To prove the existence of a saddle-node bifurcation, we use the classification of equilibria for a two-dimensional system 
presented in Section 4.2.5 in \cite{izhikevich2007dynamicalnote2}. Specifically, we show that one of the eigenvalues of the Jacobian  at $(\Bar{\beta}_c, p^*)$ becomes zero. Using equations $\Bar{\beta}_c=\delta/g(\hat{p})$ and $g^{\prime}(\hat{p})=0$, one can easily verify the determinant of the Jacobian  at $(\Bar{\beta}_c, p^*)$ equals  zero thus the existence of a saddle node bifurcation. 

Since $g(\hat{p})\in (0,1)$, we have $ \Bar{\beta}_c/\delta$ is always larger than the epidemic threshold in the SIS model. 

\item This follows directly from (\romannumeral 1) and (\romannumeral 2).

\item Comparing the equations satisfied by the EE solutions for the SIS and the homogeneous actSIS (\textit{risk-tolerators}) model, we observe that $\phi^T(p^*)$ and $p^*$ increase with $ \Bar{\beta}$. As $\phi^T(p^*)$ approaches  $1$, $p^*$ approaches  $1-\delta/\Bar{\beta}$. 
\end{enumerate}
\end{proof}

\begin{figure}[!htbp]
\centering
\subfigure[Phase Portrait]{
    \includegraphics[width=0.21\textwidth]{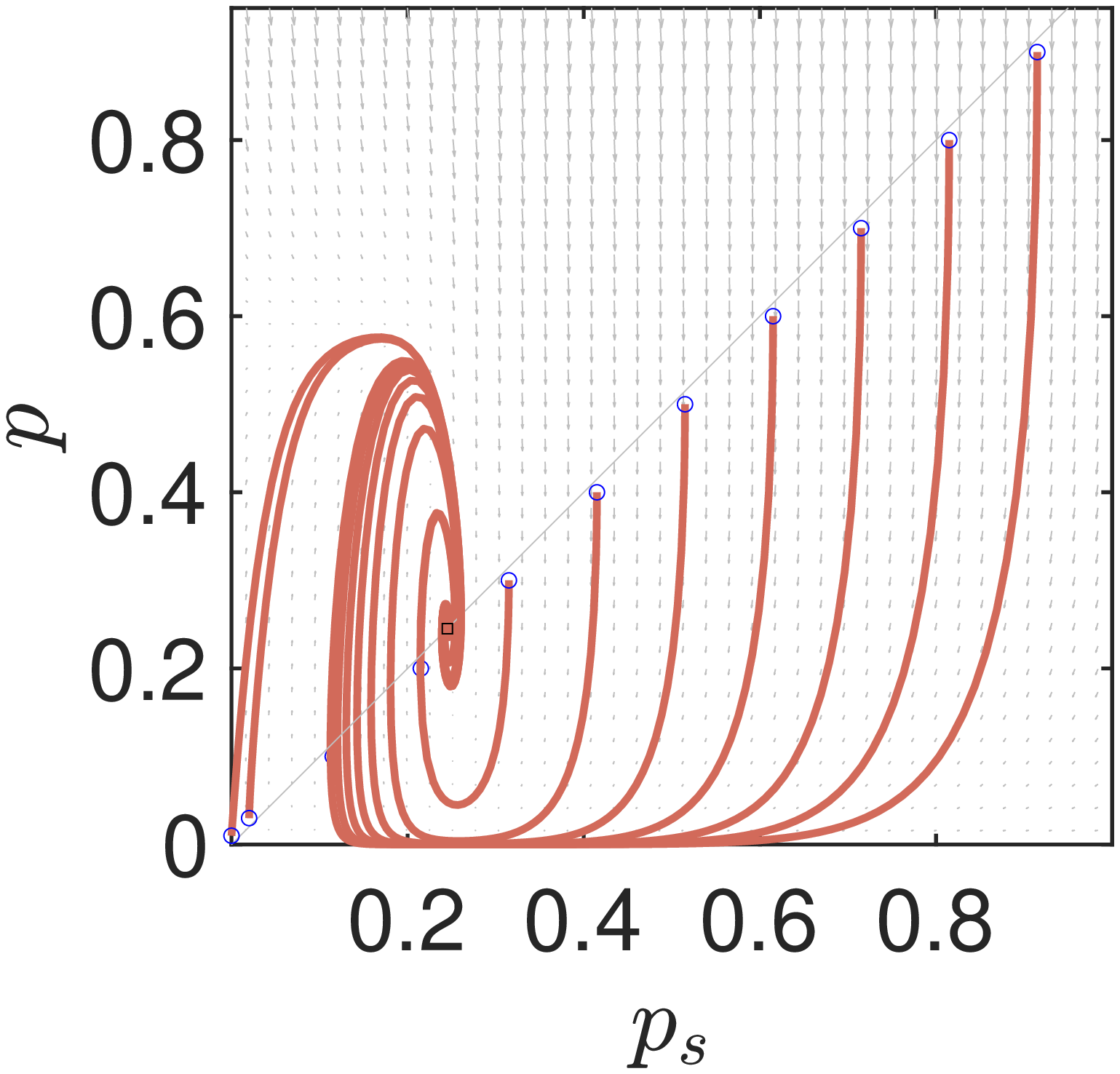}
}
\subfigure[Undershoot and overshoot]{
    \includegraphics[width=0.2\textwidth]{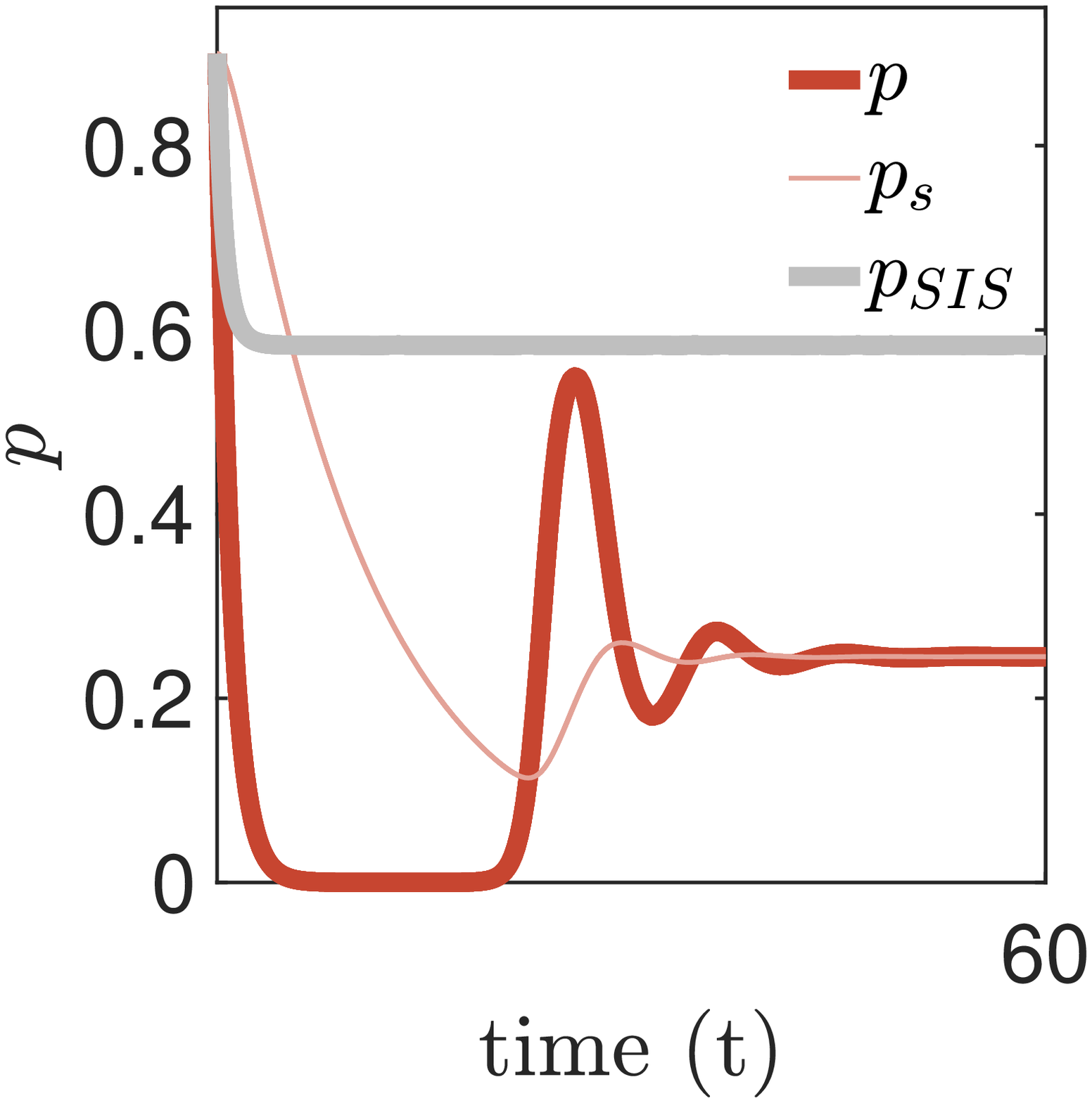}
}

  \centering
\figurecaption{%
The EE as a stable focus%
}{%
(a) Phase portrait of the actSIS model for a homogeneous population of \textit{risk-averters}. Grey arrows depict the vector fields. The initial conditions and end points of the simulations are plotted as circles and squares respectively. Starting from low initial values, $p$ exhibits a rapid increase followed by a decrease to the EE. Starting from high initial values, $p$ exhibits a rapid decrease to nearly zero  before an increase to the equilibrium state.
(b) For $p(0) =p_s(0) = 0.9$, the transient state $p$ undershoots to near zero for a long time and then overshoots. The parameters  are $\bar{\beta}=2.4,\mu_A=0.25, \nu_A=8$. 
}
\label{fig:focus}
\end{figure}

\begin{theorem} [Steady-state behavior of the actSIS model with homogeneous risk-averters] Consider the actSIS dynamics  for a homogeneous population of \textit{risk-averters} given by \eqref{orginal-ASIS-dynamics} with $r(t) = 0$ and $\alpha(\cdot) =1-\phi^{A}(\cdot)$. Then the following hold:
\begin{enumerate}[label=(\roman*)]
    \item There are  two equilibrium solutions:
\begin{enumerate}
    \item IFE: $p_s = p = 0$. It is stable if $  \Bar{\beta} < \delta$; 
    \item EE: $p_s = p = p^*$  satisfying  $  (1 - \Phi^A(p^*))(1-p^*) = \delta/\Bar{\beta}$. It is always stable if it exists, which is when $\Bar{\beta} > \delta$.
\end{enumerate}

\item The EE 
is a stable focus if $b > (a-c)^2/(4ac)$,
where  $a=\frac{\delta p^*}{1-p^*}, b=\frac{1-p^*}{1-\phi^A(p^*)}\phi^{A \prime} (p^*), c=\frac{1}{\tau_s}$.

\item  
The EE is always upper bounded by $1-\delta/\Bar{\beta}$, the SIS EE. 

\end{enumerate}
\end{theorem}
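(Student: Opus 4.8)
The plan is to run the same planar equilibrium-and-Jacobian analysis used for the risk-tolerator case, exploiting that $\dot p_s=0$ together with $r\equiv 0$ forces $p_s=p$, so all steady states lie on the diagonal.

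For part (i): the equilibria solve $p=0$ (IFE) or $\Bar\beta(1-\phi^A(p))(1-p)=\delta$ (EE). To count the nonzero roots I would examine $f(p):=(1-\phi^A(p))(1-p)$ on $[0,1)$, a product of two strictly positive, strictly decreasing functions with $f(0)=1$ (since $\phi^A(0)=0$) and $f(p)\to 0$ as $p\to 1$; hence $f$ maps $[0,1)$ strictly decreasingly onto $(0,1]$, so $f(p^*)=\delta/\Bar\beta$ has a unique root $p^*\in(0,1)$ precisely when $\Bar\beta>\delta$. For stability I would write the $2\times 2$ Jacobian of \eqref{orginal-ASIS-dynamics} with $\alpha=1-\phi^A$; at the IFE it is lower triangular (because $\phi^A(0)=0$) with eigenvalues $\Bar\beta-\delta$ and $-1/\tau_s$, giving stability iff $\Bar\beta<\delta$. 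At the EE I would substitute the equilibrium identity $\Bar\beta(1-\phi^A(p^*))=\delta/(1-p^*)$ to reduce the Jacobian to
\[
\left.J\right|_{\mathrm{EE}}=\begin{bmatrix}-a & -ab\\ c & -c\end{bmatrix},
\]
with $a,b,c$ exactly the positive quantities in the statement; then $\mathrm{tr}\,J=-(a+c)<0$ and $\det J=ac(1+b)>0$, so the EE is asymptotically stable whenever it exists.

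For part (ii): the reduced Jacobian has characteristic polynomial $\lambda^2+(a+c)\lambda+ac(1+b)$, whose discriminant simplifies to $(a-c)^2-4abc$; the EE therefore carries a complex-conjugate pair of eigenvalues — i.e.\ is a focus — exactly when $(a-c)^2-4abc<0$, equivalently $b>(a-c)^2/(4ac)$, and the focus is stable since $\mathrm{tr}\,J<0$ by part (i). For part (iii): the SIS endemic equilibrium obeys $1-p=\delta/\Bar\beta$, while $p^*$ obeys $(1-\phi^A(p^*))(1-p^*)=\delta/\Bar\beta$; since $p^*\in(0,1)$ forces $0<1-\phi^A(p^*)<1$, dividing gives $1-p^*=(\delta/\Bar\beta)/(1-\phi^A(p^*))>\delta/\Bar\beta$, i.e.\ $p^*<1-\delta/\Bar\beta$.

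All steps are elementary; the one needing genuine care is the algebraic collapse of $\left.J\right|_{\mathrm{EE}}$ to the $(a,b,c)$ form — in particular verifying that eliminating $\Bar\beta$ through the equilibrium relation turns the $(1,2)$ entry into exactly $-ab$ — together with pinning down the endpoint values $\phi^A(0)=0$ and $\phi^A(1)=1$ that justify the monotonicity and root-counting for the EE.
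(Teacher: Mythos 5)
Your proof is correct and follows essentially the same route as the paper: identify the unique EE from the strictly decreasing map $p\mapsto(1-\phi^{A}(p))(1-p)$, collapse the EE Jacobian to the $(a,b,c)$ form via the equilibrium identity $\bar{\beta}(1-\phi^{A}(p^*))=\delta/(1-p^*)$, and apply the trace--determinant classification to get stability and the focus condition $b>(a-c)^2/(4ac)$. Your part (iii) argument --- dividing the equilibrium relation to get $1-p^*=(\delta/\bar{\beta})/(1-\phi^{A}(p^*))>\delta/\bar{\beta}$ directly --- is slightly more direct than the paper's monotonicity-in-$\bar{\beta}$ reasoning, but it establishes the same bound.
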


\begin{proof}
\begin{enumerate}[label=(\roman*)]
    \item Since $1-\Phi^A(\cdot)$ is monotone decreasing, there exists at most one  EE solution when $\delta/\bar{\beta} <1$. 
The Jacobian  of \eqref{orginal-ASIS-dynamics} is 
\begin{equation}
   J = \begin{bmatrix}
                     \Bar{\beta} (1-\Phi^A(p_{s})) (1-2 p)-\delta & - \Bar{\beta} (1-p) p \phi^{A \prime}(p_{s}) \\
                    \frac{1}{\tau_s}  & -\frac{1}{\tau_s} 
                    \end{bmatrix},
                    \label{gradientASIS2}
\end{equation}
which simplifies to 
 \begin{equation}
    \left.J\right|_{ \text{IFE}}= \begin{bmatrix}
                     \Bar{\beta} -\delta & 0 \\
                    \frac{1}{\tau_s}  & -\frac{1}{\tau_s}
                    \end{bmatrix},
\end{equation}
for the IFE, and
 \begin{equation}
    \left.J\right|_{ \text{EE}}= \begin{bmatrix}
                     - \frac{\delta p^*}{1- p^*} & - \Bar{\beta} (1-p^*) p^* \phi^{A \prime}(p^*) \\
                    \frac{1}{\tau_s}  & -\frac{1}{\tau_s} 
                    \end{bmatrix}
                    \label{eq:jaco-ee}
\end{equation}
for the EE. Therefore, the IFE is stable when $  \Bar{\beta} < \delta$. For the EE, stability requires $\frac{\delta p^*}{1-p^*} + \Bar{\beta} (1-p^*) p^* \Phi^{A\prime}(p^*) >0 $. Since ${\Phi^A}'(p^*)$ is always non-negative, the stability conditions always holds when such a solution exists. 

\item We prove when EE is  a stable focus, by proving when
     \eqref{eq:jaco-ee}  has a pair of complex-conjugate eigenvalues with negative real part. 
We use the classification of equilibria for a two-dimensional system according to the trace  and  determinant of the Jacobian  (\cite{izhikevich2007dynamicalnote2}). Denoting $A:=\left.J\right|_{ \text{EE}}$  and using the definitions of $a, b$ and $c$, we have $\Tr(A)=-(a+c)$ and $ \det(A)=ac(1+b)$. Therefore the condition $b>\frac{(a-c)^2}{4ac}$, guarantees that $\Tr(A)^2-4\det(A)<0$, and together with $\Tr(A)<0$, that the EE is a stable focus.

\item As $ \Bar{\beta}$ increases, the right hand side of  $  (1 - \Phi^A(p^*))(1-p^*) = \delta/\Bar{\beta}$ decreases. As a result, $p^*$ increases with $ \Bar{\beta}$. On the other hand, $1-\phi^A(p^*)$ decreases away from $1$ thus the EE are always bounded above by $1-\delta/\Bar{\beta}$.
\end{enumerate}

\end{proof}


\section{Heterogeneous Population}
\label{sec:hetero}
We introduce the heterogeneous network actSIS model.
The transmission rates, recovery rates and feedback responses  can all be  distinct. 
However, we restrict to the following set up as a first step in exploring the role of heterogeneity. 
  Let the population comprise two homogeneous sub-populations, \textit{risk-tolerators} and \textit{risk-averters}, which differ only in their feedback responses to the infection. 
   We assume the disease transmission occurs across sub-populations but not within. 
  For the generalization of the network SIS model \eqref{ntwork-sis-p} to the network actSIS model, this  translates as $n=2$, $\bar{\beta}_i=\bar{\beta}$, $\delta_i = \delta$,  $\beta_{ii} = 0$,  $\alpha(\cdot)=\alpha^1(\cdot)$ for all \textit{risk-tolerators} and $\alpha(\cdot)=\alpha^2(\cdot)$ for all \textit{risk-averters}.  
  Let $p_1(t)$ ($p_2(t)$) denote the fraction of \textit{risk-tolerators} (\textit{risk-averters}) that are infected at time $t$.  
  Let $\beta_{12}=\bar{\beta}\alpha^1(p_{s_{2}})=\bar{\beta} \phi^{T}\left(p_{s_{2}}\right)$ be the effective infection rate from  \textit{risk-averters} to \textit{risk-tolerators}, and $\beta_{21}=\bar{\beta}\alpha^2(p_{s_{1}}) =\bar{\beta} \left(1-\phi^{A}\left(p_{s_{1}}\right)\right)$  from  \textit{risk-tolerators} to  \textit{risk-averters}. 
The heterogeneous network actSIS model in the case of these two subpopulations is
\begin{equation}
\begin{aligned}
&\dot{p}_{1}=\left(1-p_{1}\right) \beta_{1 2} \cdot p_{2}-\delta p_{1},\\
&\dot{p}_{2}=\left(1-p_{2}\right) \beta_{2 1} \cdot p_{1}-\delta p_{2},\\
&\tau_s \dot{p}_{s_{1}}=-p_{s_{1}}+p_{1},\\
&\tau_s \dot{p}_{s_{2}}=-p_{s_{2}}+p_{2}.
\label{eq:two-node}
\end{aligned}
\end{equation}

The equilibrium solutions  
satisfy $p_{1}^{*}=p_{s_{1}}^{*}, p_{2}^{*}=p_{s_{2}}^{*}$ and 
\begin{equation}
    \frac{p_{1}^{*}}{1-p_{1}^{*}} \cdot \frac{1}{\alpha^{1}\left(p_{2}^{*}\right)p_{2}^{*}}=\frac{p_{2}^{* }}{1-p_{2}^{*}} \cdot \frac{1}{\alpha^{1}\left(p_{2}^{*}\right)p_{1}^{*}} =\frac{\delta}{\bar{\beta}}.
\end{equation}

As shown in Fig.~\ref{fig:bifur-Hopf}(a), there are model parameters for which  system \eqref{eq:two-node} undergoes a Hopf bifurcation with a stable limit cycle.  As  $\Bar{\beta}$ increases from zero, system \eqref{eq:two-node} undergoes a saddle-node bifurcation from a single stable IFE to bistability of  the EE and the IFE. As $\Bar{\beta}$ increases further across the Hopf bifurcation point  $\Bar{\beta}^*$, the system exhibits a stable limit cycle about the EE. 
 For completeness, we present the following theorem (Theorem 3.4.2 in \cite{guckenheimer2013nonlinearHopfProof}), which we use to prove the existence of stable limit cycles for \eqref{eq:two-node}. 

\begin{figure}[!htbp]
\centering
\subfigure[Hopf bifurcation of system \eqref{eq:two-node}]{
    \includegraphics[width=0.22\textwidth]{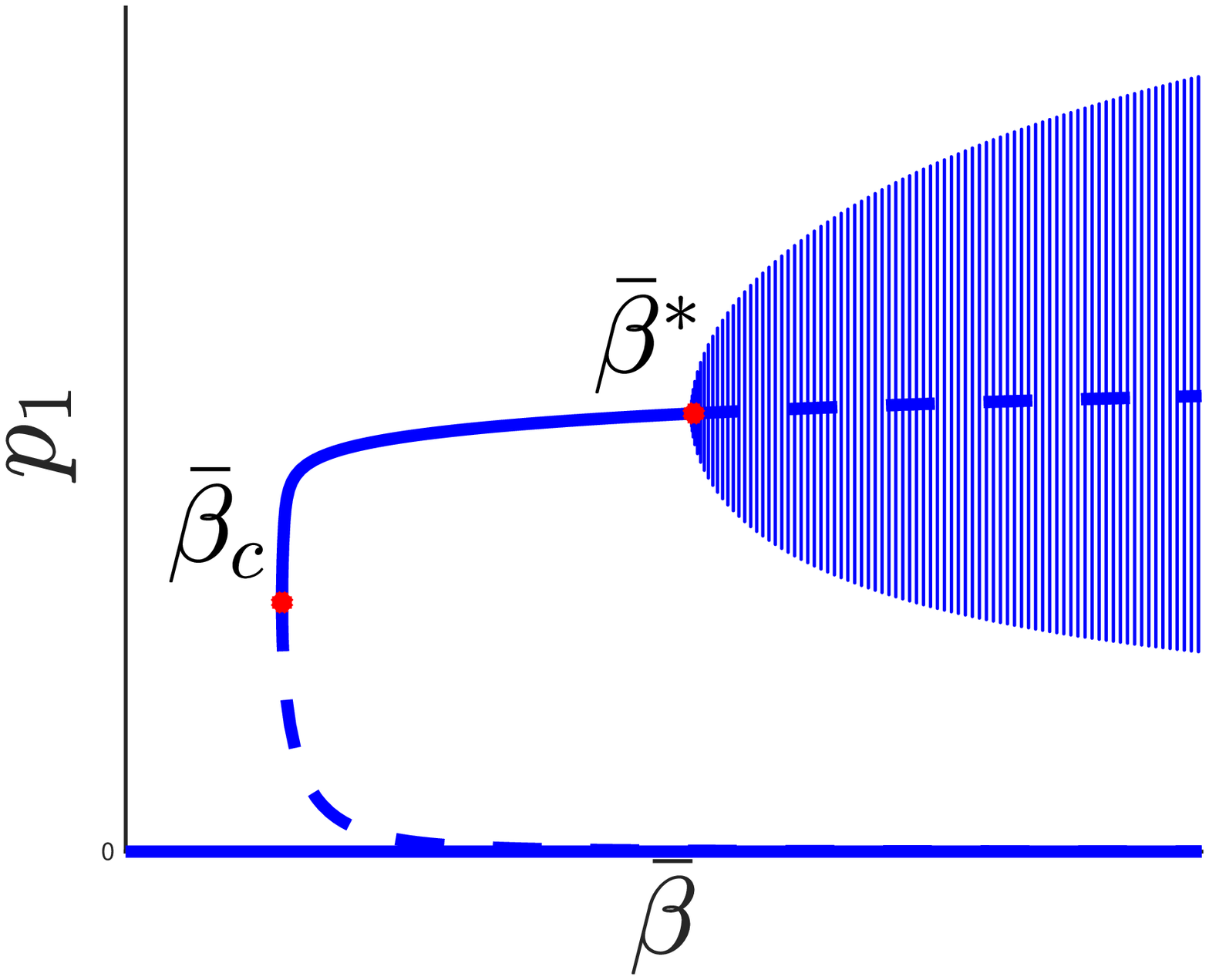}
}
\subfigure[Verification of (H2)]{
    \includegraphics[width=0.21\textwidth]{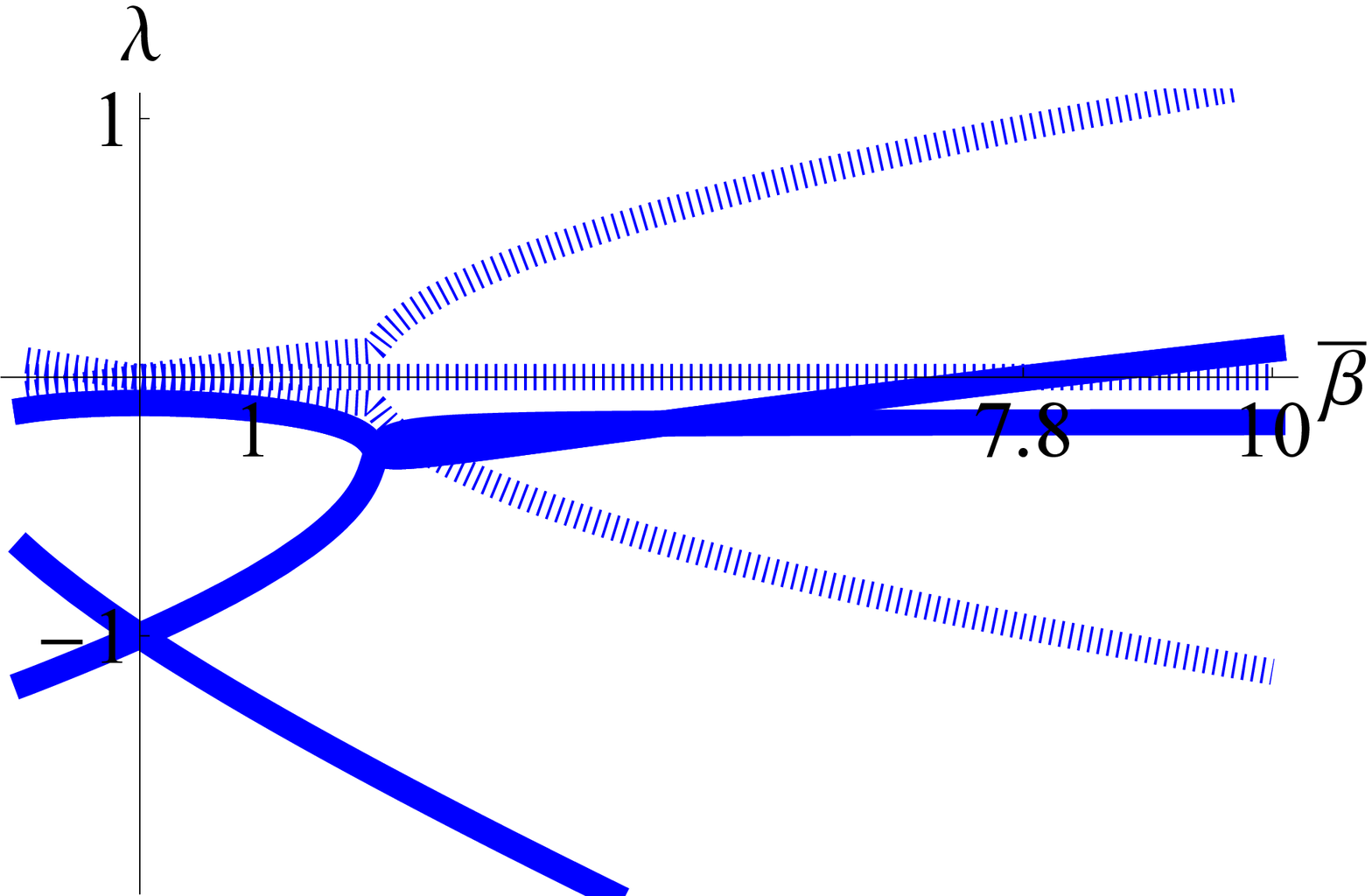}
}
  \centering
\figurecaption{%
Hopf bifurcation}{
(a) Bifurcation diagram for $p_1$ with bifurcation parameter $\bar\beta$. For small positive values of $\bar\beta$, the IFE is the only stable solution.  As $\bar\beta$ increases  a saddle-node bifurcation leads to bistability of the IFE and EE.  As $\bar\beta$ continues to increase there is a Hopf bifurcation (red dot) and stable oscillations about the EE.
The blue solid curves depict  stable solutions, while dashed curves depict unstable solutions. 
(b) The real and imaginary parts of the eigenvalues of the Jacobian $D_{p} \boldsymbol{f}$ at $(\boldsymbol{p}^{\mathbf{*}}, \Bar{\beta}^*)$ 
are plotted in solid and dashed lines. At around $\Bar{\beta}^*=7.8$, a pair of complex eigenvalues crosses the real line ($x$-axis) with a nonzero derivative, verifying (H2). Parameters are $\delta = 1, \tau_s = 10, \mu_T=0.45,\nu_T=0.8, \mu_A=0.45,\nu_A=20.$
}
\label{fig:bifur-Hopf}
\end{figure}

\vspace{1.8mm}

\begin{theorem}[Guckenheimer and Holmes]
Suppose that the heterogeneous actSIS model \eqref{eq:two-node} expressed as  $\dot{\boldsymbol{p}}=\boldsymbol{f}(\boldsymbol{p}, \Bar{\beta})$, $\boldsymbol{p} = (p_1, p_2, p_{s_1},p_{s_2})$, 
$\Bar{\beta} \in \mathbb{R}$, has an equilibrium at
$\left(\boldsymbol{p}^{\mathbf{*}}, \Bar{\beta}^*\right)$ and the following properties are satisfied:\\
\begin{itemize}
    \item ($\mathrm{H} 1$) \; The Jacobian $\left.D_{p} \boldsymbol{f}\right|_{\left(\boldsymbol{p}^{\mathbf{*}}, \Bar{\beta}^*\right)}$ has a simple pair of pure imaginary eigenvalues $\lambda\left(\Bar{\beta}^*\right)$ and $\overline{\lambda\left(\Bar{\beta}^*\right)}$ and no other eigenvalues with zero real parts,
    \item $\left.(\mathrm{H} 2) \; \frac{d}{d \Bar{\beta}}(\operatorname{Re} \lambda(\Bar{\beta}))\right|_{\left(\Bar{\beta}=\Bar{\beta}^*\right)} \neq 0$.
\end{itemize}
Then the dynamics undergo a Hopf bifurcation at $\left(\boldsymbol{p}^{\mathbf{*}}, \Bar{\beta}^*\right)$ resulting in periodic solutions. The stability of the periodic solutions is given by the sign of the first Lyapunov coefficient of the dynamics $\left.\ell_{1}\right|_{\left(\boldsymbol{p}^{\mathbf{*}}, \Bar{\beta}^*\right)} .$ If $\ell_{1}<0,$ then these solutions are stable limit cycles and the Hopf bifurcation is supercritical, while if $\ell_{1}>0,$ the periodic solutions are repelling.
\label{hopf-thm}
\end{theorem}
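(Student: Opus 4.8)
The plan is to observe that the statement above is not a new result but the classical Hopf bifurcation theorem (Theorem~3.4.2 of \cite{guckenheimer2013nonlinearHopfProof}) specialized to the four-dimensional vector field $\boldsymbol{f}$ defined by \eqref{eq:two-node}, with the generic bifurcation parameter there renamed $\Bar{\beta}$. Accordingly the proof has only two ingredients: (a) check that \eqref{eq:two-node} meets the regularity hypotheses under which the general theorem is stated, and (b) recall the center-manifold / normal-form argument that produces the bifurcating branch and ties its stability to the sign of $\ell_1$.

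For (a), note that the right-hand side of \eqref{eq:two-node} is, componentwise, a quadratic polynomial in $(p_1,p_2,p_{s_1},p_{s_2})$ whose coefficients are built from $\beta_{12}=\Bar{\beta}\,\phi^{T}(p_{s_2})$ and $\beta_{21}=\Bar{\beta}\,(1-\phi^{A}(p_{s_1}))$. Since the sigmoid $\phi(\,\cdot\,;\mu,\nu)$ is smooth on the open interval $(0,1)$ and $\boldsymbol{f}$ depends on $\Bar{\beta}$ linearly, $\boldsymbol{f}$ is $C^{\infty}$ — in fact real-analytic — on the open set $\{(p_1,p_2,p_{s_1},p_{s_2}):p_i,p_{s_i}\in(0,1)\}\times\mathbb{R}$, which contains any interior endemic equilibrium $(\boldsymbol{p}^{*},\Bar{\beta}^*)$. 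This is more than enough smoothness for the cited theorem, and in particular for the first Lyapunov coefficient, whose evaluation needs only the second- and third-order Taylor jet of $\boldsymbol{f}$ at the equilibrium.

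For (b), I would recall the standard reduction. Under (H1) the Jacobian $\left.D_{p}\boldsymbol{f}\right|_{(\boldsymbol{p}^{*},\Bar{\beta}^*)}$ has spectrum $\{\pm i\omega_0,\lambda_3,\lambda_4\}$ with $\omega_0>0$ and $\operatorname{Re}\lambda_{3,4}\neq 0$, so the parameter-dependent center manifold theorem gives a smooth two-dimensional invariant manifold on which the flow, after a near-identity transformation to Poincar\'e normal form, reads
\[
\dot\rho=\bigl(d\,(\Bar{\beta}-\Bar{\beta}^*)+\ell_1\rho^2\bigr)\rho+O\!\left(\rho^5,|\Bar{\beta}-\Bar{\beta}^*|^2\rho\right),\qquad \dot\theta=\omega_0+O\!\left(\rho^2,|\Bar{\beta}-\Bar{\beta}^*|\right),
\]
where $d=\tfrac{d}{d\Bar{\beta}}\operatorname{Re}\lambda(\Bar{\beta})\big|_{\Bar{\beta}=\Bar{\beta}^*}$. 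Hypothesis (H2) is exactly $d\neq 0$, so the amplitude balance $d\,(\Bar{\beta}-\Bar{\beta}^*)+\ell_1\rho^2=0$ has a branch of nontrivial roots $\rho^2=-d(\Bar{\beta}-\Bar{\beta}^*)/\ell_1$ for $\Bar{\beta}$ on one side of $\Bar{\beta}^*$; each such root is a periodic orbit of \eqref{eq:two-node}, its stability within the center manifold is governed by the sign of $\ell_1$ (attracting when $\ell_1<0$, repelling when $\ell_1>0$), and its transverse stability is inherited from $\operatorname{Re}\lambda_{3,4}<0$. This is precisely the assertion of the theorem.

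The genuinely substantive work — which this theorem only sets the stage for — is the verification of (H1)--(H2) and the sign of $\ell_1$ for the concrete system \eqref{eq:two-node}. I would do this by locating an interior EE, forming the $4\times 4$ Jacobian there, tracking its spectrum as $\Bar{\beta}$ varies to pin down the crossing $\Bar{\beta}^*$ (cf.\ Fig.~\ref{fig:bifur-Hopf}(b)), and then either applying the projection formula for $\ell_1$ to the explicit quadratic and cubic jet of $\boldsymbol{f}$, or more practically confirming $\ell_1<0$ numerically via continuation (e.g.\ with matcont). That step, rather than the abstract theorem above, is where essentially all of the difficulty lies.
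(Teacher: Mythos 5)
The paper offers no proof of this statement at all: it is quoted verbatim as Theorem~3.4.2 of Guckenheimer and Holmes and used purely as a black box, with the section's actual work going into verifying (H1) for the concrete system (Proposition~\ref{lemma:H1}) and checking (H2) and $\ell_1<0$ numerically. Your reading of the situation is therefore exactly right, and your sketch --- smoothness of $\boldsymbol{f}$ on the open state space, parameter-dependent center-manifold reduction, Poincar\'e normal form in polar coordinates, and the amplitude balance $\rho^2=-d\,(\Bar{\beta}-\Bar{\beta}^*)/\ell_1$ --- is the standard proof of the classical theorem and is sound. One point worth making explicit: (H1) as stated only excludes \emph{other} eigenvalues on the imaginary axis, so the conclusion that the bifurcating orbits are stable limit cycles when $\ell_1<0$ additionally requires $\operatorname{Re}\lambda_{3,4}<0$; you invoke this for transverse stability, but it is an extra hypothesis not literally contained in (H1)--(H2) (it does hold at the EE of \eqref{eq:two-node} for the parameters used, cf.\ Fig.~\ref{fig:bifur-Hopf}(b)). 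Your closing observation that the substantive difficulty lies in verifying the hypotheses for \eqref{eq:two-node}, not in the abstract theorem, is precisely how the paper proceeds.
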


We show in Proposition~\ref{lemma:H1}  conditions on the model parameters that guarantee that the non-hyperbolicity condition (H1) is satisfied for the heterogeneous actSIS model \eqref{eq:two-node}. We verify  numerically that condition (H2) is satisfied and $\ell_{1}<0$.

\vspace{2mm} 

\begin{proposition} 
Denote
 $m=(1-p_1^*)(1-p_2^*)$,
$q=p_1^*\beta_{21}^*/p_2^*$,
 $s=p_2^*\beta_{12}^*/p_1^*$,
$v=p_1^*\beta_{21}^{\prime*}/p_2^*$,
$w=p_2^*\beta_{12}^{\prime*}/p_1^*$,
$\beta_{12}^*=\Bar{\beta}\alpha^{1}\left(p_{2}^*\right),$  
$\beta_{21}^*=\Bar{\beta}\alpha^{2}\left(p_{1}^*\right))$. Then, for system \eqref{eq:two-node}, the non-hyperbolicity condition (H1) in
 Theorem~\ref{hopf-thm} is satisfied if 
 \begin{equation*}
     c>0, \quad a \neq 0,
 \end{equation*}
 where 
 \begin{equation*}
     \begin{aligned}
      a&:=s+q+\frac{2}{\tau_s}, \\ ac&:=\frac{2}{\tau_s}(1-m)sq+\frac{1}{\tau_s^2}(s+q)
      -m(v\beta_{12}^* + w\beta_{21}^*).\\
     \end{aligned}
     \label{hopf:condition}
 \end{equation*}
\label{lemma:H1}
\end{proposition}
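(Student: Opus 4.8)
The plan is to form the Jacobian $D_{\boldsymbol p}\boldsymbol f$ at the equilibrium $(\boldsymbol p^*,\bar\beta^*)$, exploit its block structure to collapse the characteristic polynomial to a quartic, and then read off when that quartic has a simple pair of pure imaginary roots and no other root on the imaginary axis. Differentiating \eqref{eq:two-node} and using the equilibrium identities $p_{s_i}^*=p_i^*$, $(1-p_1^*)\beta_{12}^*p_2^*=\delta p_1^*$ and $(1-p_2^*)\beta_{21}^*p_1^*=\delta p_2^*$, the Jacobian takes the block form
\[
D_{\boldsymbol p}\boldsymbol f|_{(\boldsymbol p^*,\bar\beta^*)}=\begin{bmatrix}P & Q\\ \tfrac1{\tau_s}I_2 & -\tfrac1{\tau_s}I_2\end{bmatrix},\qquad P=\begin{bmatrix}-s & \tfrac{\delta p_1^*}{p_2^*}\\ \tfrac{\delta p_2^*}{p_1^*} & -q\end{bmatrix},
\]
where $Q$ is anti-diagonal with $Q_{12}\propto\beta_{12}^{\prime *}$ and $Q_{21}\propto\beta_{21}^{\prime *}$ (each filtered state $p_{s_i}$ enters only the equation of the other sub-population, so the diagonal of $Q$ vanishes), $P_{12}P_{21}=\delta^2$, and the trace equals $-(s+q+2/\tau_s)=-a$ (note $s,q>0$, so $a>0$).

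Since the lower block-row is $(\tfrac1{\tau_s}I_2,\,-\tfrac1{\tau_s}I_2)$, the eigenvector relation $\eta_i=\xi_i/(1+\tau_s\lambda)$ for the last two components reduces the eigenvalue problem to $\det\!\big(P-\lambda I_2+\tfrac{1}{1+\tau_s\lambda}Q\big)=0$; clearing $(1+\tau_s\lambda)^2$ gives the quartic
\[
\chi(\lambda)=(1+\tau_s\lambda)^2\big[(\lambda+s)(\lambda+q)-\delta^2\big]-(1+\tau_s\lambda)\big(P_{12}Q_{21}+P_{21}Q_{12}\big)-Q_{12}Q_{21},
\]
whose roots are exactly the four eigenvalues of the Jacobian (the cleared factor corresponds to $\lambda=-1/\tau_s$, which lies in the open left half-plane, is not a root of $\chi$, and is irrelevant to (H1)). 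Expanding, dividing by the leading coefficient $\tau_s^2$, and using $\delta^2=(1-p_1^*)(1-p_2^*)sq=msq$, I would bring the monic characteristic polynomial to the form
\[
\chi_0(\lambda)=\lambda^4+a\,\lambda^3+c_2\,\lambda^2+(ac)\,\lambda+c_0,
\]
in which the cubic coefficient is $a=s+q+2/\tau_s$ and the linear coefficient is exactly the expression denoted $ac$ in the statement; this last identification is precisely where the compact symbols $m,s,q,v,w$ and the equilibrium relations of the first paragraph (notably $\delta^2=msq$) are used.

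For a pair of pure imaginary eigenvalues $\pm i\omega$, $\omega\in\mathbb{R}$, one needs $\chi_0(i\omega)=0$, which splits into $\omega^4-c_2\omega^2+c_0=0$ and $a\,\omega\,(c-\omega^2)=0$. Since $a\neq0$ and a nondegenerate pair requires $\omega\neq0$, the second equation forces $\omega^2=c$, so $c>0$ is precisely the condition that such a pair $\pm i\sqrt c$ can occur. The first equation, with $\omega^2=c$ substituted, becomes the scalar condition $c_0=c(c_2-c)$, which implicitly fixes the Hopf value $\bar\beta^*$ (all of $c$, $c_2$, $c_0$ vary with $\bar\beta$ through $\boldsymbol p^*$) and is consistent with the crossing shown in Fig.~\ref{fig:bifur-Hopf}(b) and with (H2). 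At such a $\bar\beta^*$ the quartic factors as $\chi_0(\lambda)=(\lambda^2+c)(\lambda^2+a\lambda+c_0/c)$, so the two remaining eigenvalues are the roots of $\lambda^2+a\lambda+c_0/c$; these have real part $-a/2\neq0$, are nonzero (since the Hopf point is distinct from the saddle-node, where $c_0=0$), and cannot coincide with $\pm i\sqrt c$. Hence $\pm i\sqrt c$ is the only pair on the imaginary axis and it is simple, i.e., (H1) holds.

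The conceptual skeleton above is light; the step I expect to be the main obstacle is the expansion in the second paragraph — carrying out the quartic multiplication and collapsing the raw partial derivatives into $m,s,q,v,w$ via the equilibrium relations (notably $\delta^2=msq$), so that the linear coefficient of $\chi_0$ reduces to the stated $ac$. The only additional care needed is to confirm that the residual quadratic factor $\lambda^2+a\lambda+c_0/c$ contributes neither a zero eigenvalue nor a second copy of $\pm i\sqrt c$, both of which follow immediately from $a\neq0$, $c>0$, and nondegeneracy of the equilibrium.
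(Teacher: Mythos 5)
Your proof is correct and follows essentially the same route as the paper: compute the Jacobian at the equilibrium, collapse its characteristic polynomial to a quartic, and identify $a$ and $ac$ as the $\lambda^3$ and $\lambda^1$ coefficients so that (H1) reduces to the factorization $(\lambda^2+c)(\lambda^2+a\lambda+b)$ with $c>0$ and $a\neq 0$. You are in fact slightly more careful than the paper in two places: you make explicit the remaining coefficient constraint $c_0=c(c_2-c)$ that must hold for this factorization to exist (which is what actually pins down $\bar{\beta}^*$), and you rule out a zero root of the residual quadratic via $c_0\neq 0$, a degeneracy that the condition $a\neq 0$ alone does not exclude.
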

\begin{proof}
For a four-dimensional system to satisfy (H1) \\ the eigenvalues of the Jacobian $\left.D_{p} \boldsymbol{f}\right|_{\left(\boldsymbol{p}^{\mathbf{*}}, \Bar{\beta}^*\right)}$ must satisfy
\begin{equation}
    \left(\lambda_{}^2+c\right)\left(\lambda_{}^2+a\lambda_{}+b\right)=0,
\label{eq:H1-cond}
\end{equation}
for some $a\neq0$,$b\in \mathbb{R}$,and $c>0$. 
We compute  the Jacobian\\ 
\vspace{2mm} 
\resizebox{.5\textwidth}{!}{%
$\displaystyle
    \left.D_{p} \boldsymbol{f}\right|_{\left(\boldsymbol{p}^{\mathbf{*}}, \Bar{\beta}^*\right)}= \begin{bmatrix}
             - \beta_{12}^*p_2^*-\delta & (1-p_1^*)\beta_{12}^*  &0 & (1-p_1^*)p_2^*\beta_{12}^{\prime*} \\
             (1-p_2^*)\beta_{21}^* & - \beta_{21}^*p_1^*-\delta & (1-p_2^*)p_1^*\beta_{21}^{\prime*} &0\\
             \frac{1}{\tau_s}  & 0 & -\frac{1}{\tau_s} & 0\\
            0 & \frac{1}{\tau_s}  & 0 & -\frac{1}{\tau_s} \\
            \end{bmatrix}
$} which has eigenvalues that satisfy

\begin{equation}
\begin{aligned}
 \left(\frac{1}{\tau_s}+\lambda_{}\right)^2 \left(s+\lambda_{}\right)\left(q+\lambda_{}\right)
    -m\left(\frac{1}{\tau_s}+\lambda_{}\right)^2sq & \\
  -m\left(\frac{1}{\tau_s}+\lambda_{}\right)v\beta_{12}^*
    -m\left(\frac{1}{\tau_s}+\lambda_{}\right)w\beta_{21}^*
    -mvw   & =0.  
    \end{aligned}
        \label{eq:compute-eigen}
\end{equation}

Matching coefficients of \eqref{eq:H1-cond} and \eqref{eq:compute-eigen}, we derive the expressions for $a$, $b$, and $c$, in terms of the model parameters  ($\bar{\beta},\mu_{T}, \gamma_{T}, \mu_{A}, \gamma_{A}$) for  \eqref{eq:two-node} to satisfy the first part of (H1). 
We have $\lambda\left(\Bar{\beta}^*\right)=\sqrt{c} \di$, a pure imaginary eigenvalue. $a\neq0$ guarantees  there are no other eigenvalues with zero real part.
\end{proof}


A proof of conditions guaranteeing (H2) of Theorem~\ref{hopf-thm} is the subject of ongoing work. Fig \ref{fig:bifur-Hopf}(b) shows numerically that (H2) is satisfied for the parameters selected. We also  checked that  $\ell_1<0$. 
An illustration of the sustained oscillations corresponding to the stable limit cycle of \eqref{eq:two-node} is depicted in Fig~\ref{fig:osci}.


\begin{figure}[!htbp]
\centering
\subfigure[Sustained oscillations]{
    \includegraphics[width=0.45\textwidth]{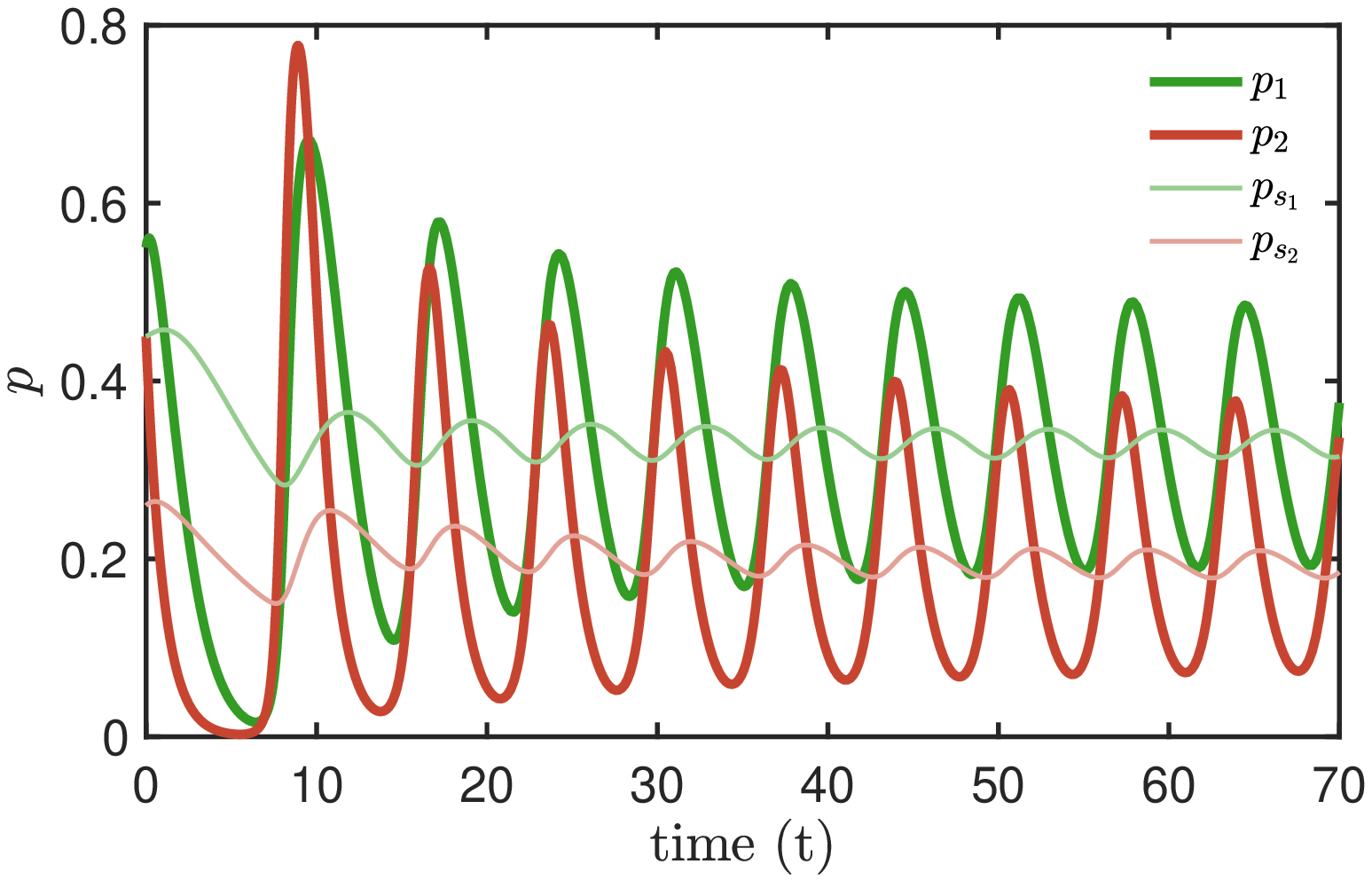}
}


\subfigure[$\beta_{12}$ evolution]{
    \includegraphics[width=0.22\textwidth]{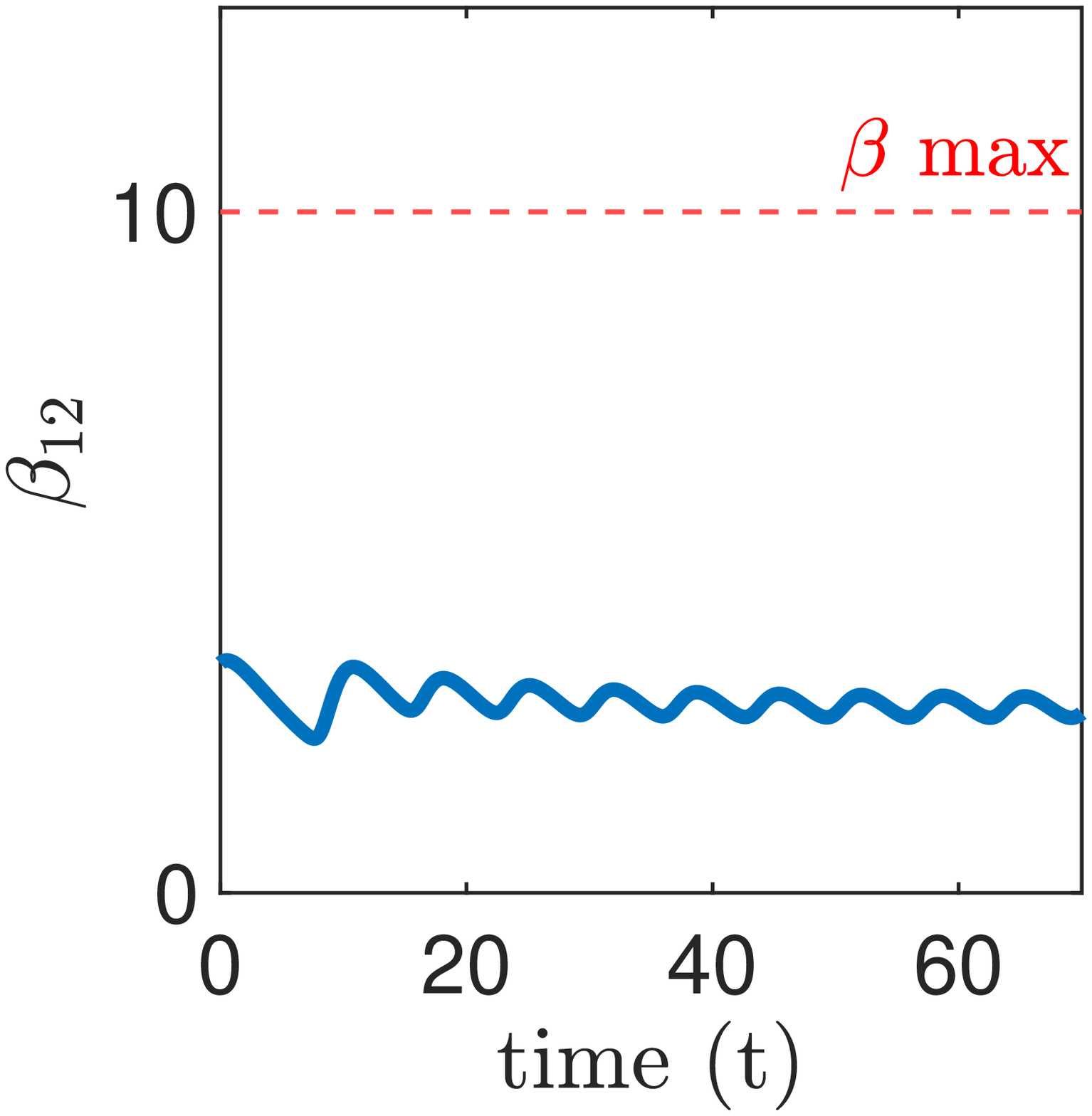}
}
\subfigure[$\beta_{21}$ evolution]{
    \includegraphics[width=0.22\textwidth]{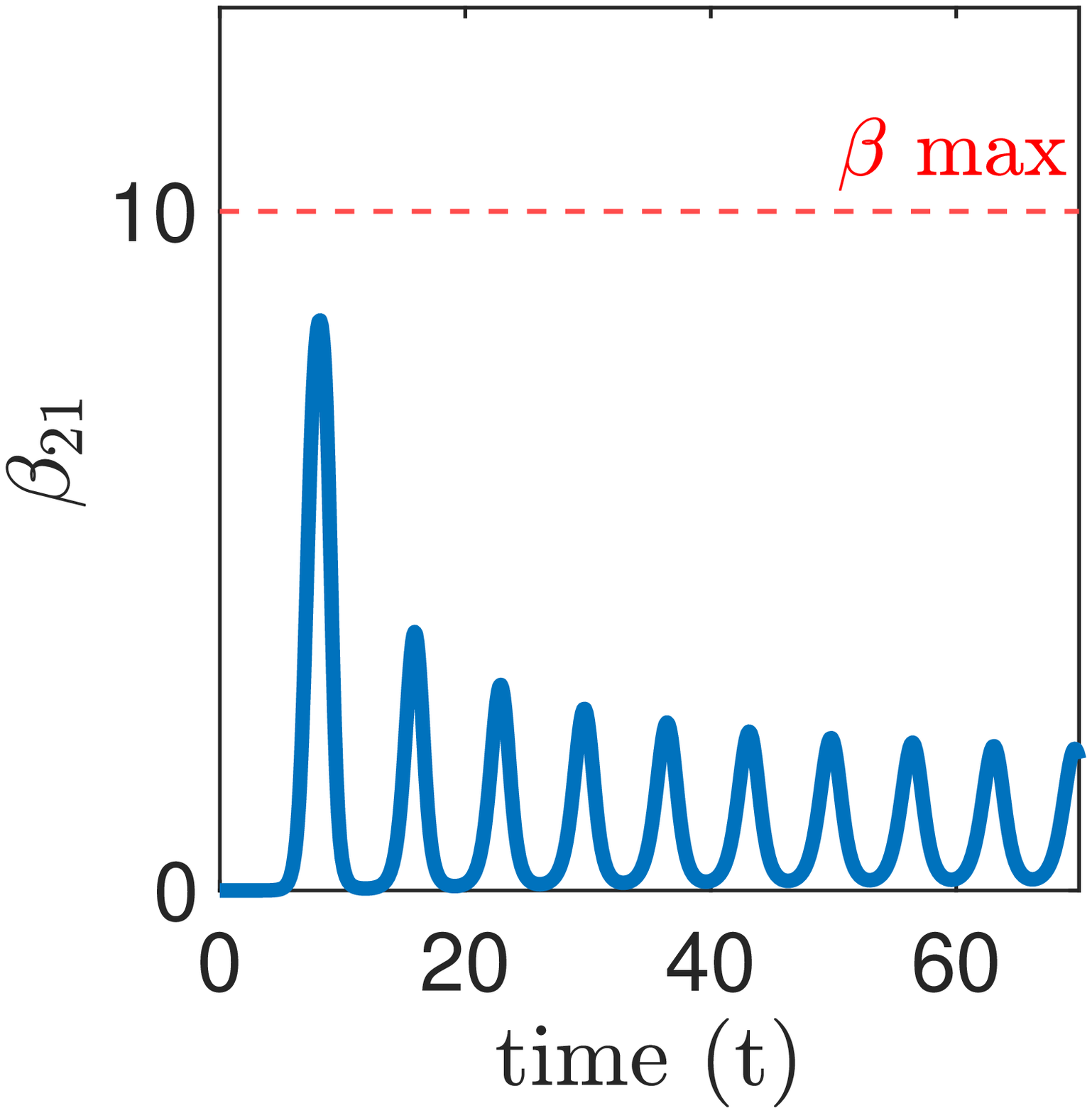}
}
  \centering
\figurecaption{%
Stable limit cycles of the heterogeneous actSIS model}
{%

a) When $\bar{\beta} >\bar{\beta}^* $, system \eqref{eq:two-node} exhibits stable oscillations. 
States $p_1$ and $p_2$  initially decrease quickly  
due to different mechanisms: 
$p_1$ decreases because $p_{s_2}(0) < \mu_T$, which drives $\beta_{12}$ down and  $p_2$ decreases because $p_{s_1}(0) > \mu_A$, which drives $\beta_{21}$ down. 
The low $p_1$ leads to a  decrease in $p_{s_1}$, which in turn drives up  $\beta_{21}$, thus increasing $p_2$. The increase in $p_2$ leads to an increase in $p_{s_2}$, which in turn drives up $p_1$. 
The process repeats resulting in  sustained oscillations. 
b-c): The time evolution of the effective infection rates. Parameters used for the simulation:  $\bar{\beta}= 8.8$, $\mu_T=0.45, \nu_T=0.8, \mu_A=0.3, \nu_A=20$. Initial conditions:  $p_1(0)=0.55, p_2(0)=0.45, p_{s_1}(0)=0.45, p_{s_2}(0)=0.66.$
}
\label{fig:osci}
\end{figure}

\section{Final remarks}
\label{sec:disc}

The actSIS model incorporates two novel mechanisms as compared with the SIS model: a feedback mechanism for the effective infection rates, and a time scale separation between the state of the system and the state used in the feedback law. The qualitative differences we have shown for the actSIS model are due to both of the mechanisms. We have observed sustained oscillations even when some of the individuals are \textit{risk-ignorers} and under relaxed assumptions on interconnections.  We will examine the broader set of possibilities in future work and consider applications in other biological and socio-ecological processes.





\end{document}